\documentclass[10pt,twocolumn,twoside]{IEEEtran}
\usepackage{amsmath,amsfonts}
\usepackage{algorithmic}
\usepackage{array}
\usepackage{textcomp}
\usepackage{stfloats}
\usepackage{url}
\usepackage{verbatim}
\usepackage{graphicx}
\usepackage{balance}
\usepackage{xcolor}
\usepackage{subfig}
\usepackage{orcidlink}
\usepackage{booktabs} 
\usepackage{hyperref}
\hypersetup{hidelinks,
	colorlinks=true,    
	linkcolor=blue,     
	anchorcolor=blue,   
	citecolor=blue,     
	urlcolor=blue,       
	pdfstartview=Fit,
	breaklinks=true
}
\usepackage{cite}
\usepackage{amsmath,amssymb,amsfonts,amsthm}
\usepackage{algorithm}
\usepackage{flushend}
\newtheorem{theorem}{Theorem}
\newtheorem{lemma}{Lemma}
\newtheorem{ass}{Assumption}

\begin{document}
	\title{Cooperative Switched Formation Control of Autonomous Vehicles: An Event-triggered Approach to Input Saturation and Time-delay Challenges}
	
	\author{Ziming Wang\orcidlink{0000-0001-7000-9578}, Guanxuan Jiang\orcidlink{0009-0001-3686-6266}, Yihuai Zhang\orcidlink{0000-0002-7363-7796}, Karl H. Johansson\orcidlink{0000-0001-9940-5929} \IEEEmembership{Fellow,~IEEE} \\and Apostolos I. Rikos\orcidlink{0000-0002-8737-1984} \IEEEmembership{Member,~IEEE}


            \thanks{Corresponding author: Ziming Wang.}
            \thanks{Ziming Wang and Apostolos I. Rikos are with the Artificial Intelligence Thrust, The Hong Kong University of Science and Technology (Guangzhou), Guangzhou, China. Apostolos I. Rikos is also affiliated with the Department of Computer Science and Engineering, The Hong Kong University of Science and Technology, Clear Water Bay, Hong Kong. E-mails:{\tt~zwang216@connect.hkust-gz.edu.cn}, {\tt~apostolosr@hkust-gz.edu.cn}.}
            \thanks{Guanxuan Jiang is with the Computational Media and Arts Thrust, The Hong Kong University of Science and Technology (Guangzhou), Guangzhou, China. E-mail:{\tt~gjiang240@connect.hkust-gz.edu.cn}.}
            \thanks{Yihuai Zhang is with the Intelligent Transportation Thrust, The Hong Kong University of Science and Technology (Guangzhou), Guangzhou, China. E-mail:{\tt~yzhang169@connect.hkust-gz.edu.cn}.}
            \thanks{Karl H. Johansson is with the Division of Decision and Control Systems, KTH Royal Institute of Technology, SE-100 44 Stockholm, Sweden. He is also affiliated with Digital Futures. E-mail:{\tt~kallej@kth.se}.}
            \thanks{The work of A.I.R. was supported by the Guangdong Provincial Project (Grant No. 2024QN11G109).} 
	}
	\maketitle
    
	\begin{abstract}
  This paper presents a collaborative adaptive formation control framework for autonomous vehicles (AVs), that explicitly handles system uncertainties, input saturation, and communication delays. 
  To overcome the inherent physical torque limits of steering and braking actuators, an input saturation compensation mechanism is introduced to render nonlinearities tractable and improve control reliability. 
  Additionally, a delay-compensating auxiliary system is designed to mitigate the effects of communication delays and reduce tracking errors. 
  Our framework incorporates a dynamic-threshold event-triggered control (ETC) strategy to optimize resource usage. 
  Additionally, uncertainty observers and symmetric barrier Lyapunov functions are developed to ensure robust and safe formation maneuvers. 
  Finally, the effectiveness of the proposed approach is validated through numerical simulations of vehicle formations, complemented by a 3D visualization video demonstrating the dynamic fleet reconfiguration process. 
	\end{abstract}
	
	\begin{IEEEkeywords}
		Adaptive control, formation control, input saturation, time delay, neural networks, event-triggered control, autonomous vehicles. 
	\end{IEEEkeywords}

	\section{Introduction}
    
	With the rapid growth of transportation demand and the increasing pressure to improve energy efficiency and traffic safety, intelligent transportation systems (ITSs) have received sustained attention from both academia and industry~\cite{intro1,intro2.1,intro2,intro3,intro4}. As an important component of ITSs, vehicular platoon control provides a promising framework for coordinated multi-vehicle operation in complex traffic environments. Through close cooperation among connected vehicles, platoon control can effectively improve highway utilization~\cite{intro1}, lower fuel consumption~\cite{intro2}, and strengthen operational safety~\cite{intro3}. The earliest large-scale implementation of this concept can be found in the California PATH program in the mid-1980s~\cite{intro4}. Since then, platoon control has gradually evolved into a major research topic in automated and connected transportation systems. Its fundamental objective is to maintain coordinated vehicle motion with desired velocity profiles and safe spacing policies, thereby supporting efficient and reliable traffic flow~\cite{intro5,intro6,intro7}.

    Although, most existing platoon control studies are primarily concerned with longitudinal regulation in fixed single-lane configurations \cite{2018, 2019, 2020}, vehicular formation control~\cite{2024, 2025} incorporates both longitudinal and lateral dynamics, thereby providing greater flexibility for handling obstacles, traffic cut-ins, and multi-lane driving scenarios. 
    Therefore, unlike platoon control, vehicular formation control is more valuable because AVs in real-world scenarios change lanes, inevitably requiring coupled lateral and longitudinal control. 
    Motivated by these considerations, this paper investigates the adaptive switched formation control problem for AVs under an intricate environment.

    \subsection{Related Works}

    Most existing vehicular formation control methods rely on continuous-time or periodic implementations, where control signals are updated continuously or at static sampling instants~\cite{intro3,intro4,intro5,intro6}. 
    Such mechanisms may cause unnecessary consumption of communication and computational resources in autonomous vehicular fleets. To address this issue, event-triggered control (ETC) has been introduced as an effective resource-aware strategy, in which controller updates are executed only when a prescribed triggering condition is satisfied. 
    In the vehicular formation control area, several ETC-related results have been reported. For instance, \cite{2019} introduced ETC to reduce inter-vehicle communication frequency, but this inevitably led to communication delays, which were incorporated into the controller. 
    The work in \cite{intro8} proposed an ETC scheme to determine whether sampled data packets should be released to the vehicular ad hoc network for inter-vehicle cooperation, while \cite{intro9} developed a distributed event-triggered platooning method under two distinct triggering conditions. 
    Additionally, \cite{2026} proposed a novel event-triggered sliding mode control algorithm that systematically tackles key challenges, including external disturbances, complex vehicle dynamics, and communication efficiency. 
    External disturbances represent common sources of uncertainty encountered during road traffic operations, such as wind forces, road irregularities, and load variations.
    Motivated by these studies, this paper introduces a dynamic triggering-condition-based ETC strategy to reduce control resource consumption and improve implementation efficiency. 

In practical vehicular systems, time delays arising from sensors, driveline systems, and actuators are unavoidable due to environmental variability and real-time implementation requirements. 
Time delays encompass perception and computation latency as well as actuator response lag, all of which must be explicitly modeled for a realistic controller design.
Existing studies \cite{intro10,intro11,intro12,intro13} have shown that such delays may degrade platoon stability, enlarge tracking errors, and even threaten driving safety, especially in communication-based coordinated control scenarios. To address this issue, this paper develops a delay-compensation auxiliary system to estimate and counteract delay-induced effects, thereby reducing tracking errors and improving system robustness and responsiveness in dynamic environments.

\begin{table*}[!t]
		\caption{Comparison among different vehicular formation control research}
        \label{comp}
		\centering
		\begin{tabular}{cccccc}
			\toprule[2pt] 
			\textbf{Reference} & \textbf{Control direction} & \textbf{Neural networks} & \textbf{Event-triggered control} & \textbf{Input saturation} & \textbf{Time delay}  \\
			\midrule 
                Ge and Han 2017\cite{2017}    & Longitudinal \& Lateral & \text{ } & \checkmark   \\
                Ghaedsharaf et al., 2018\cite{2018}    & Longitudinal   & \text{ } & \text{ } & \text{ } & \checkmark \\
                Keijzer and Ferrari 2019\cite{2019} & Longitudinal & \text{ } & \checkmark    \\
                Khalifa et al., 2020\cite{2020}   & Longitudinal & \checkmark & \text{ } & \text{ } & \checkmark \\
                Chen et al., 2021\cite{2021}   & Longitudinal  & \text{ } & \text{ } & \checkmark  & \text{ } \\
                Liu et al., 2022\cite{2022}     & Longitudinal & \checkmark \\
                Zhai et al., 2023\cite{2023} & Longitudinal & \text{ } & \text{ } & \checkmark  & \checkmark    \\
			    Park and Yoo 2024\cite{2024} & Longitudinal \& Lateral & \text{ } & \text{ } & \checkmark  & \checkmark \\
                Wang et al., 2025\cite{2025} & Longitudinal \& Lateral  & \checkmark  & \checkmark \\
                Li et al., 2026\cite{2026} & Longitudinal & \checkmark & \checkmark \\
                This paper & Longitudinal \& Lateral & \checkmark & \checkmark  & \checkmark & \checkmark \\
			\bottomrule[2pt] 
		\end{tabular}
	\end{table*}

Input saturation is another important issue in controller design, since excessive control inputs may drive actuators beyond their physical limits and consequently degrade system performance. 
Input saturation reflects the physical limitations of throttle, braking, and steering actuators, while state constraints correspond to speed limits imposed by traffic regulations.
In control theory, this problem has been extensively studied in \cite{intro14, intro15, intro16}. 
For example, \cite{intro14} investigated the tracking control problem for a class of multi-input multi-output systems subject to input saturation, while \cite{intro15} addressed adaptive fuzzy state-feedback control for a class of single-input single-output nonlinear systems in nonstrict-feedback form. 
In \cite{intro16}, input saturation is considered in adaptive fuzzy control of an uncertain nonlinear system, which makes the nonlinear system more applicable in fuzzy related practice. 
Motivated by these studies, input saturation is explicitly taken into account in this paper to improve the practical feasibility and reliability of the proposed control framework. 



Existing works have only considered the influence of one or two conditions among neural networks, event-triggered control, input saturation, and time delays on the control process, and are primarily purely theoretical research with limited application scenarios \cite{2018,2019,2020,2024,2025,2026,2017,2021,2022,2023}.
This also is shown in Table~\ref{comp} where we provide a comparative summary of different research on vehicular formation control.
Therefore, most existing studies do not simultaneously address input saturation, time delays, and external disturbances in a unified formation control framework, nor do they provide explicit physical justification for all these constraints together.

\subsection{Main Contributions} 


Motivated by the limitations of the existing literature, we present a dynamic-threshold event-triggered adaptive formation control framework for AVs, developed using the backstepping method, RBF neural networks (NNs), and Lyapunov-based stability theory. To the best of the authors' knowledge, the proposed framework is the first to simultaneously address uncertainties, input saturation, and communication time delays. The main contributions of this paper are summarized as follows: 
\\ \noindent
\textbf{A.} We propose a collaborative adaptive formation control framework that simultaneously handles input saturation, communication delays, and external uncertainties within a unified backstepping-based design. 
We rigorously prove that all closed-loop signals are uniformly ultimately bounded (UUB) and that Zeno behavior is strictly excluded (Theorem~\ref{main_theorem_conv}).
\\ \noindent
\textbf{B.} We demonstrate the framework on a complex formation scenario featuring both variations in fleet velocity and formation pattern changes, which has received limited attention in prior ETC-based formation control literature. 

Compared with traditional platoon control studies \cite{contr1,contr2,contr3}, which mainly focus on longitudinal control, this paper integrates longitudinal and lateral dynamics within a unified framework, thereby achieving coordinated two-dimensional control. 
Moreover, distinct from static-threshold ETC methods \cite{contr4,contr5,contr6}, the ETC strategy developed in this paper adopts a dynamic triggering condition that depends on the current controller value, allowing control resources to be allocated in a more flexible and efficient manner. 
	
The remainder of this paper is organized as follows. Section~\ref{sec2} provides the problem formulation, including the system model and the ETC strategy. Section~\ref{sec3} details the proposed controller design, while Section~\ref{sec4} establishes the Lyapunov-based stability analysis. Section~\ref{sec5} verifies the effectiveness of the adaptive control algorithm through simulations of a switched formation scenario, and Section~\ref{sec6} concludes the paper.

    \begin{figure*}[ht!]
            \centering
             \includegraphics[width=1\textwidth, height=0.25\textheight]{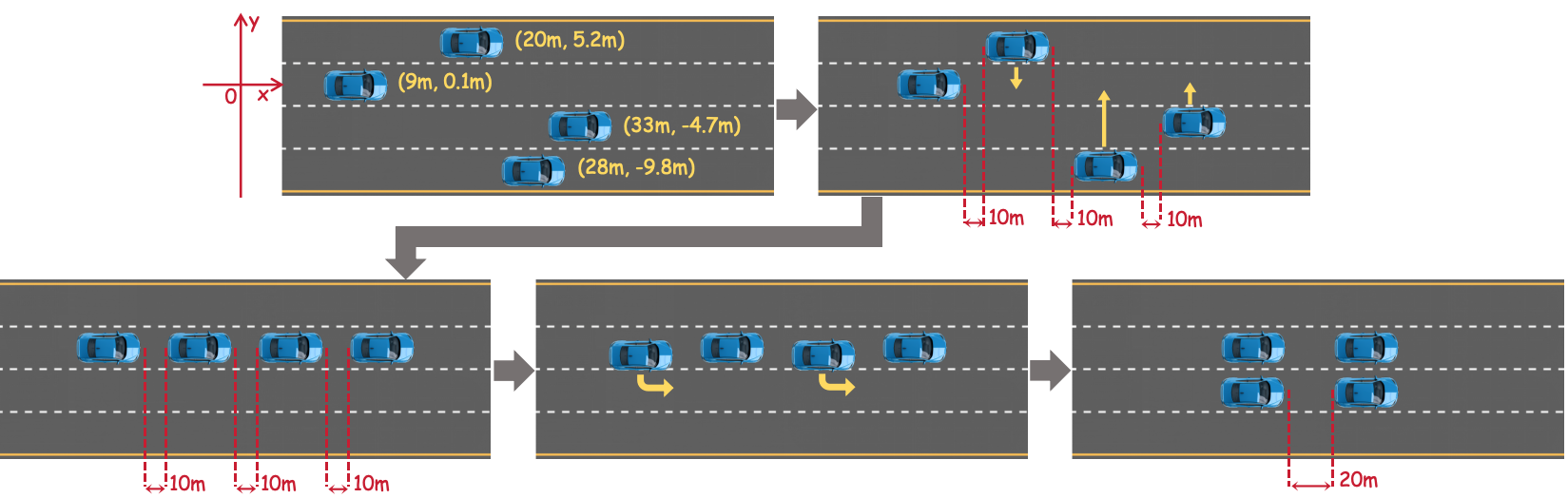}
             \caption{Fleet transition process in cooperative switched formation control of AVs.}
             \label{running}
         \end{figure*}
                \begin{figure}[ht!]
                    \centering
                     \includegraphics[width=0.5\textwidth, height=0.17\textheight]{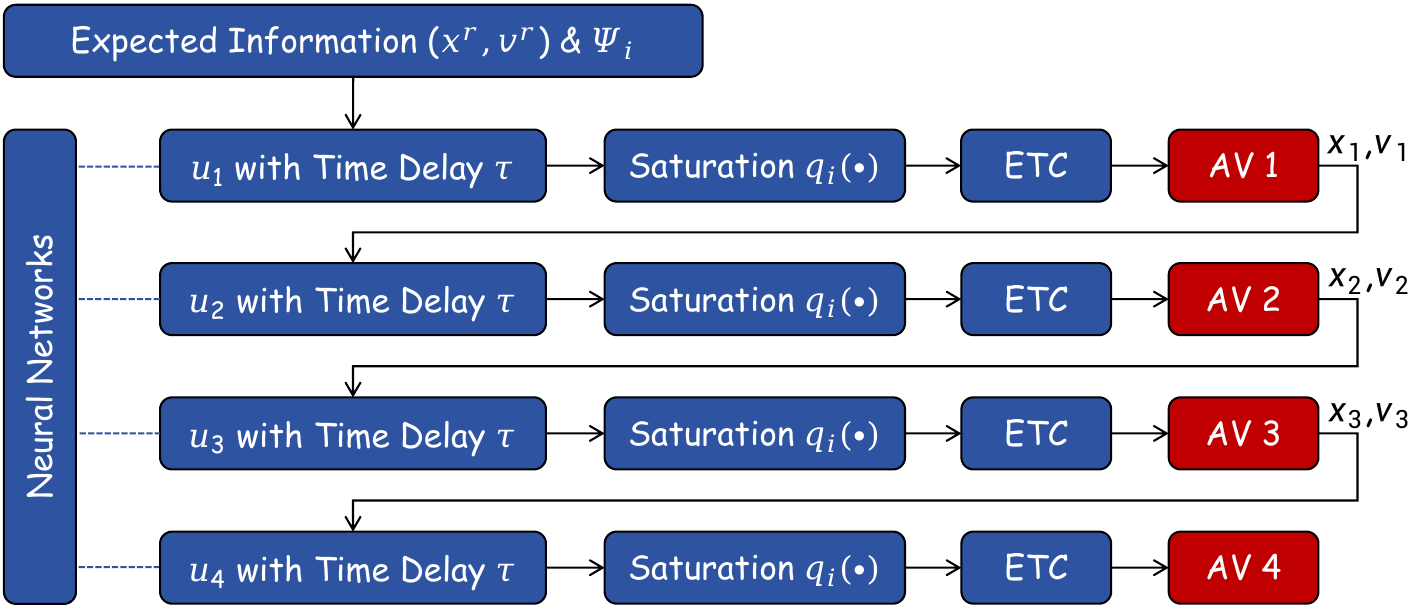}
                     \caption{The control architecture block diagram.}
                     \label{cabd}
                 \end{figure}


	\section{Problem Formulation}\label{sec2}

    \subsection{Notations}

    Let $\mathbb{R}$ and $\mathbb{R}^{n \times m}$ denote the sets of real numbers and $n \times m$ real matrices, respectively. 
    For a generic variable $\chi$, we have that $\chi^x$ and $\chi^y$ represent its longitudinal and lateral components, respectively. 
    Let $\hat{\chi}$ denote the estimate of $\chi^*$, with the estimation error being $\tilde{\chi} = \hat{\chi} - \chi^*$. 
    For a positive definite matrix $\chi$, we have that $\Pi_{min}(\chi)$ and $\Pi_{max}(\chi)$ represent its minimum and maximum eigenvalues, respectively.

	\subsection{System Model with Input Saturation and Time-delay}

        In this paper, we consider the AVs formation as a multi-agent system that consists of one virtual leader and $n$ followers. 
        Let us define the index of each follower vehicle as AV-$i$, $i=1, \cdots ,n$. 
        Then the second-order dynamical system model of AVs is designed as 
        \begin{equation}
		\label{equ1}
		\begin{aligned}
			\dot{x}_i&=v_i, \\
			\dot{v}_i&=r_i+p_i+q_i(u_i(t-\tau)), 
		\end{aligned}
	    \end{equation}
        where $x_i=[x_i^x,x_i^y]^T \in \mathbb{R}^{2\times 1}$ denotes the position of AV-$i$, $v_i=[v_i^x,v_i^y]^T\in \mathbb{R}^{2\times 1}$ denotes the velocity of AV-$i$, $m_i \in \mathbb{R}$ denotes the mass of AV-$i$, $r_i=[r_i^x/m_i,r_i^y/m_i]^T\in \mathbb{R}^{2\times 1}$ denotes the resistance and $p_i=[p_i^x/m_i,p_i^y/m_i]^T\in \mathbb{R}^{2\times 1}$ denotes the unknown external perturbation. 
        The control input with the longitudinal and lateral traction force of the vehicle engine $u_i^x$, $u_i^y$ is denoted as $u_i=[u_i^x/m_i,u_i^y/m_i]^T\in \mathbb{R}^{2\times 1}$. 
        The saturated input with a time delay $\tau$ is denoted as $q_i(\cdot)$,  and the output associated with saturation nonlinearity is represented as
        \begin{equation}
		\label{equ2}
        \begin{aligned}
		q_i(\cdot)=\text{sat}({{u}_{i}})=\left\{ \begin{matrix}
			\text{sign}({{u}_{i}}){q^*}, & | {{u}_{i}} |\geq{{q}^*},  \\
			{{u}_{i}}, & |{{u}_{i}}|<{q^*},  \\
		\end{matrix}\right.
	    \end{aligned}
        \end{equation}
        where $q^*$ represents the bound of the control input $u_i$. Notably, the scalar operations ($|u_i|$, $\text{sign}(u_i)$, $\tanh(\cdot)$) are applied to the vector $u_i$ in an element-wise manner, and this convention is adopted throughout the remainder of this subsection.
        It can be found that the sharp corners in the representation create challenges for designing an adaptive controller using the backstepping method \cite{1}. Based on \cite{1.1}, it is possible to approximate the saturation function as
        \begin{equation}
		\label{equ3}
        \begin{aligned}
        q_i(\cdot)&=\text{sat}({{u}_{i}})=\kappa_i(u_i)+\delta_i(u_i),\\
		\kappa_i(u_i)&=q^*\tanh(\frac{u_i}{q^*})=q^*\frac{e^\frac{u_i}{q^*}-e^\frac{-u_i}{q^*}}{e^\frac{u_i}{q^*}+e^\frac{-u_i}{q^*}}, 
	    \end{aligned}
        \end{equation}
        where $\delta_i$ represents the approximation error of the saturation function. 
        Define a positive constant $D_i=q^*(1-\tanh(1))$ satisfying $|\delta_i|=\text{sat}({{u}_{i}})-\kappa_i(u_i)\leq D_i$. 
        Based on the mean-value theorem~\cite{1.1}, $\kappa_i(u_i)$ is defined as $\kappa_i(u_i)=\kappa_i(u_i^*)+\partial\kappa(\cdot)/\partial u_i|_{u_i=u_i^\mu}(u_i-u_i^*)$, where $u_i^\mu=\mu u_i+(1-\mu)u_i^*$ with $0<\mu<1$. Let $u_i^*=0$, we have $\kappa_i(u_i)=\kappa^* u_i$ and $\kappa^*=\partial\kappa(\cdot)/\partial u_i|_{u_i=u_i^\mu}$. 
        According to the properties of the $tanh$ function, there exist two positive constants $\kappa_1$ and $\kappa_2$ satisfying $0<\kappa_1\leq\kappa^*\leq\kappa_2$. 
        Then system \eqref{equ1} can be represented as
        \begin{equation} 
		\label{equ4} 
		\begin{aligned} 
			\dot{x}_i&=v_i, \\
			\dot{v}_i&=r_i+p_i+\kappa^* u_i(t-\tau)+\delta_i(u_i(t-\tau)). 
		\end{aligned}
	    \end{equation}

        \subsection{Event-Triggered Strategy}


        According to \cite{2}, we introduce the dynamic-threshold event-triggered strategy to reduce the update frequency of the AVs' controllers. Define the updated controller in event-triggered strategy as $w_i(t)$ and the controller measurement error as $e_i(t)=w_i(t)-u_i(t)$. 
        The trigger condition is defined as
        \begin{equation}
		\label{equ5}
		\begin{aligned}
			w_i(t)=-(1+\rho)(U_i
			\begin{bmatrix}
				\tanh(\frac{U_iz_{i,2,1}}{\varepsilon_1})\\
				\tanh(\frac{U_iz_{i,2,2}}{\varepsilon_2})
			\end{bmatrix}
			+\overline{\rho}
			\begin{bmatrix}
				\tanh(\frac{\overline{\rho}z_{i,2,1}}{\varepsilon_1})\\
				\tanh(\frac{\overline{\rho}z_{i,2,2}}{\varepsilon_2})
			\end{bmatrix}), 
		\end{aligned}
	\end{equation}
        \begin{equation}
		\label{equ6}
		\begin{aligned}
			u_i(t)&=w_i(t_i^k),\\
			t_i^{k+1}&=\inf\{t>t_i^k \ | \ \lvert\lvert e_i(t)\rvert\rvert\geq\rho\lvert\lvert u_i(t)\lvert\lvert+\phi\}, 
		\end{aligned}
	\end{equation}
    where $\varepsilon_1$, $\varepsilon_2$, $\phi$, $0<\rho<1$ and $\overline{\rho}>\phi/(1-\rho)$ are all designed positive constants. The designed continuous controller and the total tracking error of vehicular velocity are $U_i$ and $z_{i,2}$, respectively, and will defined later in Section~\ref{sec3}. Additionally, $z_{i,2,j}$ is the $j$th component of vector $z_{i,2}$ where $j=1,2$.
    
        \subsection{Lemmas and Operational Assumptions} 
        
        To achieve the control objective, we introduce the following lemmas and assumptions that are necessary for the subsequent stability analysis.  

        \begin{lemma}[\cite{lemma2}]\label{lemma1}
        For any constant $\chi_1 \in \mathbb{R}$ and any positive constant $\chi_2 > 0$, the inequality $0\leq |\chi_1|-\chi_1\tanh\left(\frac{\chi_1}{\chi_2}\right)\leq\varepsilon^*\chi_2$ holds, where $\varepsilon^*$ is a constant satisfying $\varepsilon^*=e^{-(\varepsilon^*+1)}$ (approximately $\varepsilon^* \approx 0.2785$).
        \end{lemma}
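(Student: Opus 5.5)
By the substitution $s=\chi_1/\chi_2$ and homogeneity, the claim reduces to a one-variable inequality. Since $\chi_2>0$, we have
\[
|\chi_1|-\chi_1\tanh(\chi_1/\chi_2)=\chi_2\bigl(|s|-s\tanh(s)\bigr),
\]
so it suffices to show $0\le g(s)\le\varepsilon^*$ for all $s\in\mathbb{R}$, where $g(s)=|s|-s\tanh(s)$. The function $g$ is even because $s\tanh(s)$ is even, so we may restrict to $s\ge0$, where $g(s)=s(1-\tanh(s))$.

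\textbf{Lower bound.} For $s\ge0$ we have $0\le\tanh(s)<1$, so $g(s)\ge0$, with equality at $s=0$.

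\textbf{Upper bound.} We maximize $g$ on $[0,\infty)$. Write
\[
1-\tanh(s)=\frac{2}{1+e^{2s}},\qquad\text{so}\qquad g(s)=\frac{2s}{1+e^{2s}}.
\]
Then $g(0)=0$ and $g(s)\to0$ as $s\to\infty$, so the supremum is attained at an interior critical point. Differentiating,
\[
g'(s)=\frac{2(1+e^{2s})-4se^{2s}}{(1+e^{2s})^2},
\]
so critical points satisfy $1+e^{2s}=2se^{2s}$, i.e.\ $e^{-2s}=2s-1$. Set $x=2s-1>0$; the condition becomes $x=e^{-(x+1)}$. The map $x\mapsto x-e^{-(x+1)}$ is strictly increasing and changes sign on $(0,1)$, so this equation has a unique root, which is exactly $\varepsilon^*$. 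Hence the unique critical point is $s^*=(1+\varepsilon^*)/2$, and since $g$ vanishes at both ends it is the global maximum on $[0,\infty)$.

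To evaluate the maximum, use $e^{-2s^*}=\varepsilon^*$, so $e^{2s^*}=1/\varepsilon^*$. Then
\[
g(s^*)=\frac{2s^*}{1+1/\varepsilon^*}=\frac{(1+\varepsilon^*)\varepsilon^*}{1+\varepsilon^*}=\varepsilon^*.
\]
Multiplying back by $\chi_2$ gives $|\chi_1|-\chi_1\tanh(\chi_1/\chi_2)\le\varepsilon^*\chi_2$, which proves the lemma.

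The main obstacle is the algebraic identification of the critical value with $\varepsilon^*$. The change of variables $x=2s-1$ is exactly what makes the defining equation $\varepsilon^*=e^{-(\varepsilon^*+1)}$ appear.
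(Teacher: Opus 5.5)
Your proof is correct. The reduction via $s=\chi_1/\chi_2$ to the even function $g(s)=|s|-s\tanh(s)=2|s|/(1+e^{2|s|})$ checks out. So do the critical-point equation $e^{-2s}=2s-1$, the substitution $x=2s-1$ that turns it into $\varepsilon^*=e^{-(\varepsilon^*+1)}$, and the evaluation $g\bigl((1+\varepsilon^*)/2\bigr)=\varepsilon^*$. The one step you leave implicit is that the unique critical point is a maximum rather than a minimum. This follows from $g(s)>0$ for $s>0$, which your lower-bound argument already gives because $\tanh(s)<1$.

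The paper gives no proof of its own and simply imports the lemma from the reference it cites. Your argument is therefore a self-contained replacement for that citation, and it is the standard one-variable calculus proof of this classical inequality. It also shows that the constant is sharp: equality holds at $\chi_1=\pm\chi_2(1+\varepsilon^*)/2$.
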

        
        \begin{lemma}[\cite{lemma1}]\label{lemma2} 
        For any positive constant $\chi_0$ and any variable $\chi$, if $|\chi|<\chi_0$, the inequality $\log\left(\frac{\chi_0^2}{\chi_0^2-\chi}\right)\leq\frac{\chi_0^2}{\chi_0^2-\chi}$ holds.
        \end{lemma}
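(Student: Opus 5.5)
The plan is to reduce the claim to the elementary scalar inequality $\log y \le y-1$ for $y>0$. Set
\[
y:=\frac{\chi_0^2}{\chi_0^2-\chi},
\]
so that the statement reads $\log y \le y$. It then suffices to check two things: that $y>0$, so the logarithm is defined, and that $\log y\le y$ for every $y>0$.

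\textbf{Step 1 (well-posedness).} The barrier Lyapunov functions later in the paper apply the lemma with $\chi=z^2$ for a constrained error $z$. In that case $|z|<\chi_0$ gives $0\le \chi<\chi_0^2$. In the literal reading of the statement, the left-hand side is only meaningful when $\chi<\chi_0^2$. In either case the denominator $\chi_0^2-\chi$ is positive. Hence $y>0$, and in the barrier-function usage in fact $y\ge 1$.

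\textbf{Step 2 (the scalar inequality).} Define $f(y)=y-1-\log y$ on $(0,\infty)$. Its derivative is $f'(y)=1-1/y$, which is negative on $(0,1)$ and positive on $(1,\infty)$. So $f$ attains its global minimum $f(1)=0$, and therefore $\log y\le y-1$ for all $y>0$. It follows that $\log y\le y-1<y$. Substituting back $y=\chi_0^2/(\chi_0^2-\chi)$ gives the claimed bound.

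I do not expect a real analytical obstacle here. The only delicate point is Step 1: the hypothesis $|\chi|<\chi_0$ alone does not guarantee $\chi<\chi_0^2$ when $\chi_0<1$. I would therefore state explicitly that the lemma is understood under $\chi<\chi_0^2$, which is the setting $\chi=z^2$ with $|z|<\chi_0$ used in the symmetric barrier Lyapunov analysis. Under that condition, Step 2 finishes the proof directly.
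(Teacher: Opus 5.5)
Your proof is correct. Once $y=\chi_0^2/(\chi_0^2-\chi)>0$, the bound $\log y\le y-1<y$, obtained by minimizing $y-1-\log y$, settles the claim.

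Your well-posedness caveat is a genuine defect of the printed statement, not pedantry. For example, $\chi_0=\frac{1}{2}$ and $\chi=\frac{2}{5}$ satisfy $|\chi|<\chi_0$ but give $\chi_0^2-\chi<0$. So the hypothesis as written only suffices for $\chi_0\ge 1$. Restricting to $\chi<\chi_0^2$ is the right repair; in particular $\chi=z^2$ with $|z|<\chi_0$ is how the bound enters \eqref{equ15} for $z_{i,1}$ and $z_{i,2}$.

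The paper gives no proof of its own and simply imports the lemma from \cite{lemma1}, so there is no competing route to compare. Note, though, that your intermediate inequality $\log y\le y-1$ becomes, with $\chi=z^2$, the sharper bound $\log\frac{\chi_0^2}{\chi_0^2-z^2}\le\frac{z^2}{\chi_0^2-z^2}$. As far as I know, that is the form in which this barrier-Lyapunov lemma usually appears in the literature, so your argument delivers that stronger version as well.
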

        
        \begin{lemma}[\cite{lemma3}]\label{lemma3}
        For any Hurwitz matrix \( E \in \mathbb{R}^{4n \times 4n} \) and any symmetric positive definite matrix \( H \in \mathbb{R}^{4n \times 4n} \), there exists a unique symmetric positive definite matrix \( F \in \mathbb{R}^{4n \times 4n} \) satisfying $E^T F + F E = -H$.
         \end{lemma}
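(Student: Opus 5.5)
The plan is to prove the standard continuous-time Lyapunov equation result directly, by constructing $F$ as an integral. Because $E$ is Hurwitz, every eigenvalue $\lambda$ of $E$ has $\mathrm{Re}(\lambda)<0$. Hence there exist constants $c\geq 1$ and $\alpha>0$ with $\|e^{Et}\|\leq c e^{-\alpha t}$ for all $t\geq 0$; this follows from the Jordan form, where polynomial factors are dominated by exponential decay. With this bound we define
\begin{equation*}
F=\int_0^\infty e^{E^T t} H e^{E t}\,dt .
\end{equation*}
The integrand is bounded in norm by $c^2\|H\|e^{-2\alpha t}$, so the integral converges absolutely.

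Next we verify the required properties of $F$.
\begin{itemize}
\item \emph{Symmetry.} $F$ is symmetric because $H^T=H$.
\item \emph{Positive definiteness.} For $\xi\neq 0$ we have $\xi^T F\xi=\int_0^\infty (e^{Et}\xi)^T H (e^{Et}\xi)\,dt$. Since $e^{Et}$ is invertible, the integrand is at least $\Pi_{min}(H)\|e^{Et}\xi\|^2>0$ and is continuous in $t$, so the integral is strictly positive.
\item \emph{The Lyapunov equation.} Note that $\frac{d}{dt}\big(e^{E^T t}He^{Et}\big)=E^T e^{E^Tt}He^{Et}+e^{E^Tt}He^{Et}E$. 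Integrating over $[0,\infty)$ and using $e^{Et}\to 0$ gives $E^TF+FE=[e^{E^Tt}He^{Et}]_0^\infty=0-H=-H$.
\end{itemize}

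For uniqueness, suppose $F_1$ and $F_2$ both solve the equation, and set $\Delta=F_1-F_2$, so that $E^T\Delta+\Delta E=0$. Then $\frac{d}{dt}\big(e^{E^Tt}\Delta e^{Et}\big)=e^{E^Tt}(E^T\Delta+\Delta E)e^{Et}=0$. So $e^{E^Tt}\Delta e^{Et}$ is constant in $t$, equal to its value $\Delta$ at $t=0$. Letting $t\to\infty$ gives $\Delta=0$.

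As an alternative route to uniqueness, one can view the linear map $X\mapsto E^TX+XE$ via the Kronecker product $I\otimes E^T+E^T\otimes I$. Its eigenvalues are $\lambda_i+\lambda_j$, which all have negative real part, so the map is invertible.

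The only step needing genuine care is the exponential decay estimate for $e^{Et}$, which justifies both the convergence of the integral and the vanishing boundary term. I would handle it through the Jordan decomposition, choosing $\alpha$ strictly between $0$ and $\min_i|\mathrm{Re}(\lambda_i)|$. Everything else is a direct computation, and the dimension $4n$ plays no role.
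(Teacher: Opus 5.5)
Your proof is correct and complete; it is the standard textbook argument for the continuous-time Lyapunov equation. The paper does not prove this lemma at all. It imports it from \cite{lemma3}, and in the proof of Theorem~\ref{main_theorem_conv} the matrices $E$, $H$, $F$ are never actually instantiated, so the only thing to compare against is the citation. Relative to that, your self-contained route buys two things. First, your uniqueness argument via $\Delta=F_1-F_2$ shows the solution is unique among \emph{all} real $4n\times 4n$ matrices. This is stronger than the statement: any solution is automatically the symmetric positive definite $F$, so those properties need not be assumed. Second, the integral representation gives quantitative two-sided eigenvalue bounds. Your decay estimate yields $\Pi_{max}(F)\le c^2\Pi_{max}(H)/(2\alpha)$, and $\|e^{Et}\xi\|\ge e^{-\|E\|t}\|\xi\|$ yields $\Pi_{min}(F)\ge \Pi_{min}(H)/(2\|E\|)$. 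These are exactly what one needs if $F$ is used in the ``scaling steps'' the paper attributes to this lemma. Your Kronecker-product alternative is also valid: with column stacking, $\mathrm{vec}(E^TX+XE)=(I\otimes E^T+E^T\otimes I)\,\mathrm{vec}(X)$. It gives existence and uniqueness in one stroke, and symmetry follows because $X^T$ solves the same equation. Positive definiteness, however, still has to come from the integral formula or the equivalent trajectory argument along $\dot{x}=Ex$.
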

     
       
       \begin{ass}[\cite{lemma3}]\label{ass1}
        For each AV-$i$, the position trajectory $x_i(t)$ is differentiable and Lipschitz continuous.
        Under the ETC strategy, the vehicular controller only updates its value at discrete triggering instants $t^k$. 
        The discrepancy between the sampled value $\overline{x}_i$ and the actual value $x_i$ is bounded such that for $t \in [t^k, t^{k+1})$ it holds
       \begin{equation}
       \begin{aligned}
       \label{equ7}
        &\|x_i(t^k)-x_i(t)\| \le \beta_i(t-t^k) \le \beta_{\max}(t-t^k),\\
        &\|\bar{x}_i(t^k)-x_i(t^k)\| \le \gamma_i \le \gamma_{\max},
        \end{aligned}
        \end{equation}
        where $\beta_i, \gamma_i > 0$ are positive constants, with $\beta_{\max}=\max_i\{\beta_i\}$ and $\gamma_{\max}=\max_i\{\gamma_i\}$. 
        \end{ass}
        
    
    \begin{ass}[\cite{ass1,ass2}]\label{ass2}
        The unknown external time-varying perturbation is bounded. 
        Specifically, there exists an unknown positive constant $\bar{p}_i$ such that $\|p_i(t)\| \le \bar{p}_i$ for all $t \ge 0$.
	\end{ass}

    
    Lemma~\ref{lemma1} is utilized to analysis the controller signal changes in the ETC strategy. 
    Lemma~\ref{lemma2} is applied in the formulation of the barrier Lyapunov function $V_i$. 
    Lemma~\ref{lemma3} is required for the key scaling steps when differentiating the Lyapunov function. 
    Finally, Assumptions~\ref{ass1} and~\ref{ass2} are employed to guarantee that the system perturbations and observation errors are strictly bounded. 
    
    \subsection{Control Architecture}\label{control_arch_subsec}
    The fleet consists of $n$ follower AVs and one virtual leader. The virtual leader broadcasts its desired trajectory $x^r(t)$ to all followers. Each follower AV-$i$ independently computes its control input using only its own state $(x_i, v_i)$, the received leader information, the prescribed formation offset $\Psi_i$ and local algorithms (neural networks, uncertainty observer, and event-triggered strategy). No direct communication occurs between followers, the control architecture is therefore distributed.
    
     \subsection{Problem Statement}
     Consider a multi-agent system consisting of $n$ follower AVs and one virtual leader. 
        The dynamics of each AV are governed by the second-order system \eqref{equ1}, which accounts for both longitudinal and lateral motion. 
        During operation, each vehicle is subject to unknown mass-normalized resistance $r_i$, time-varying external perturbations $p_i$, input saturation $q_i(\cdot)$, and communication time delays $\tau$. 
        Let $x^r(t) \in \mathbb{R}^{2\times1}$ and $v^r(t) \in \mathbb{R}^{2\times1}$ denote the desired trajectory and velocity of the virtual leader, respectively. 
        The desired position for the $i$-th follower to maintain the formation is defined as $x_i^r(t) = x^r(t) + \Psi_i$, where $\Psi_i \in \mathbb{R}^{2\times1}$ represents the prescribed formation offset for AV-$i$. 
        We define the position tracking error as $\mathcal{Z}_{i,1}(t) = x_i(t) - x_i^r(t)$.
        The control updates for each vehicle are governed by the dynamic-threshold event-triggered strategy defined in \eqref{equ5} and \eqref{equ6}. 
        The control architecture diagram is shown in Fig.~\ref{cabd}.
        Under these conditions, the control objectives of this paper are formulated as follows:
        \begin{enumerate}
            \item Design a distributed adaptive control law $u_i(t)$ such that all closed-loop signals remain bounded, and the position tracking error $\mathcal{Z}_{i,1}(t)$ is uniformly ultimately bounded (UUB). That is, $\lim_{t \to \infty} \|\mathcal{Z}_{i,1}(t)\| \le \iota$ for a small positive constant $\iota$, despite the presence of input saturation, unknown perturbations, and time delays.
            \item Ensure that the event-triggered strategy \eqref{equ6} significantly reduces the frequency of control updates from the continuous controller to the actuators, thereby conserving communication and computational resources while strictly avoiding Zeno behavior (i.e., ruling out an infinite number of triggers in a finite time interval).
        \end{enumerate}

	\section{Controller Design}\label{sec3}
    Following the control architecture described in Section~\ref{control_arch_subsec}, this section details the controller design for each follower AV-$i$. 
    The design proceeds via the backstepping method, incorporating Radial Basis Function Neural Networks (RBF NNs) to approximate unknown resistance, a command filter to avoid complexity explosion, an auxiliary system to compensate for input delay and saturation, and an uncertainty observer to handle external disturbances.
    
    To handle the unknown resistance $r_i$, we introduce RBF NNs. Note that, according to the universal approximation theorem~\cite{lemma3,ass1,ass2}, an unknown continuous function $f_i(\chi)$ can be approximated over a compact set $\Omega_{\chi}$ as $f_i(\chi)=W_i^{*T}K_i(\chi)+\varrho_i$, where, $W_i^*=[W_{i,1}^*, W_{i,2}^*,\dots, W_{i,n}^* ]^T \in \mathbb{R}^n$ denotes the optimal weight vector that minimizes the approximation error $\varrho_i$, and $K_i(\chi)=[k_{i,1}(\chi), k_{i,2}(\chi),\dots, k_{i,n}(\chi)]^T \in \mathbb{R}^n$ is the basis function vector. The $j$-th basis function is chosen as a Gaussian function $k_{i,j}(\chi)=\exp(-\|\chi-\epsilon_{i,j}\|^2/\zeta^2)$, where $\epsilon_{i,j}$ is the center of the receptive field and $\zeta$ is the width of the Gaussian function. 
    
 
 We utilize the RBF NNs to approximate the resistance term $r_i$ in \eqref{equ1} such that $g_i r_i = W_i^{*T}K_i(v_i)+\varrho_i$, where $g_i$ is a positive design constant. 
 Substituting this into \eqref{equ1} yields
    \begin{equation}
		\label{equ8}
		\begin{aligned}
			\dot{x}_i&=v_i, \\
			\dot{v}_i&=\frac{1}{g_i}W_i^{*T}K_i(v_i)+d_i+q_i(u_i(t-\tau)),
		\end{aligned} 
	\end{equation} 
    where $d_i = \frac{1}{g_i}\varrho_i + p_i$ denotes the lumped external uncertainty. 
    To handle this uncertainty, we introduce an uncertainty observer with an auxiliary variable $h_i=d_i-g_iv_i$. 
    The estimate of the uncertainty $d_i$ is expressed as $\hat{d}_i=\hat{h}_i+g_iv_i$. 
    We also define the uncertainty estimation error as $\tilde{h}_i=h_i-\hat{h}_i$ and establish a corresponding positive definite Lyapunov function candidate $V_{i,0}=\frac{1}{2}\tilde{h}_i^2$. 


To overcome the explosion of complexity problem inherent in traditional backstepping, caused by the repeated analytical differentiation of the virtual controller $\alpha_i$, we introduce a second-order command filter:
    \begin{equation}
		\label{equ9}
		\begin{aligned}
			\dot{x}^*_i&=c_i v^*_i, \\
			\dot{v}^*_i&=-2\Delta_i c_i v^*_i - c_i(x^*_i-\alpha_i),
		\end{aligned}
	\end{equation}
    where $c_i>0$ and $\Delta_i \in (0,1)$ are design constants. The variables $x^*_i$ and $v^*_i$ represent the internal states of the command filter, and $\alpha_i$ is the virtual control input. The initial states of the filter are set to $x^*_i(0)=\alpha_i(0)$ and $v^*_i(0)=0$. Furthermore, to mitigate the adverse effects of the input delay and saturation, we construct the following auxiliary system:
    \begin{equation}
		\label{equ10}
		\begin{aligned}
			\dot{\psi}_{i,1}&={\psi}_{i,2}-b_{i,1}{\psi}_{i,1}, \\
		    \dot{\psi}_{i,2}&=-b_{i,2}{\psi}_{i,2}-q_i(u_i)+q_i(u_i(t-\tau)),
		\end{aligned}
	\end{equation}
    where $b_{i,1}>\frac{1}{2}$ and $b_{i,2}>\frac{3}{2}$ are positive design constants, and the initial conditions are ${\psi}_{i,1}(0)={\psi}_{i,2}(0)=0$.

 Next, we define the compensated tracking errors $z_{i,1}$ and $z_{i,2}$, which incorporate the original tracking errors $\xi_{i,1}$ and $\xi_{i,2}$, and the command filter compensation signals $\eta_{i,1}$ and $\eta_{i,2}$:
    \begin{equation}
		\label{equ11}
		\begin{aligned}
			{z}_{i,1}&=\xi_{i,1}-\eta_{i,1}, \\
			{z}_{i,2}&=\xi_{i,2}-\eta_{i,2}.
		\end{aligned}
	\end{equation}
    
    The original tracking errors are defined as $\xi_{i,1}=x_i-x_i^r-\psi_{i,1}$ and $\xi_{i,2}=v_i-v^*_{i}-v_i^r-\psi_{i,2}$, where $x^r_i$ and $v^r_i$ are the desired reference position and velocity, respectively. 
    To compensate for the filtering errors, the compensation signals are generated by $\dot{\eta}_{i,1}=-o_{i,1}\eta_{i,1}+\eta_{i,2}+(v^*_i-\alpha_i)$ and $\dot{\eta}_{i,2}=-o_{i,2}\eta_{i,2}-\eta_{i,1}$, where $o_{i,1}, o_{i,2} > 0$ are design constants.
    
    Finally, the virtual controller $\alpha_i$, the continuous-time nominal controller $U_i$, and the adaptive update law for the NNs weights $\hat{W}_i^*$ are designed as follows:
    \begin{equation}
		\label{equ12}
		\begin{aligned}
			\alpha_i=&-o_{i,1}\xi_{i,1}-b_{i,1}\psi_{i,1}+\hat{v}_i^r, \\
			U_i=&-o_{i,2}\xi_{i,2}-b_{i,2}\psi_{i,2}+\hat{v}_i^r-\eta_{i,1}-\hat{d}_i\\
            &-\frac{z_{i,2}}{\sigma_{i,2}^2-z_{i,2}^2}-\frac{1}{g_i}\hat{W}_i^{*T}K_i(v_i), \\
            \dot{\hat{W}}_i^*=&s_i\frac{z_{i,2}\hat{W}_i^{*T}}{\sigma_{i,2}^2-z_{i,2}^2}K_i(v_i)-s_i l_i \hat{W}_i^*,
		\end{aligned}
	\end{equation}
    where $\sigma_{i,2}, s_i, l_i$ are positive design constants. 
    In \eqref{equ12}, $o_{i,1}$ and $o_{i,2}$ act as the primary controller gains $b_{i,1}$ and $b_{i,2}$ are the time-delay compensation gains, $s_i$ and $l_i$ denote the adaptation gain and leakage factor for the NNs, respectively, $\hat{v}_i^r$ represents the feedforward reference velocity, and $\hat{d}_i$ is the uncertainty compensation term provided by the observer.
    
	\section{Stability Analysis}\label{sec4}


    In this section, we prove the stability of the closed-loop second-order dynamical system utilizing the proposed adaptive event-triggered formation control framework, while formally verifying the absence of Zeno behavior. 


    \begin{theorem}\label{main_theorem_conv}
        Consider the AV second-order dynamics \eqref{equ1}, the continuous-time control law \eqref{equ12}, and the relative-threshold event-triggering mechanism \eqref{equ5}, \eqref{equ6}. Then, the vehicle fleet is able to track the prescribed position and velocity trajectories, while all closed-loop signals remain bounded. In addition, there exists a constant $t^{*}>0$ such that all inter-execution intervals $\{t^{k+1}-t^{k}\}$ are uniformly lower bounded by $t^{*}$.
\end{theorem}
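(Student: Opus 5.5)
The argument is a composite Lyapunov analysis along the backstepping chain, followed by a separate Zeno-exclusion argument for the triggering rule \eqref{equ6}. First, the triggering condition guarantees that between events $\|e_i(t)\|<\rho\|u_i(t)\|+\phi$. Following the standard relative-threshold trick, I would write $w_i(t)=(1+\lambda_{i,1}(t)\rho)u_i(t)+\lambda_{i,2}(t)\phi$ with $|\lambda_{i,1}|,|\lambda_{i,2}|\le 1$, so that $u_i=\frac{w_i-\lambda_{i,2}\phi}{1+\lambda_{i,1}\rho}$. Since each component of $w_i$ in \eqref{equ5} has the sign opposite to $z_{i,2,j}$, this yields
\[
z_{i,2}^T u_i \le -\sum_j\Big(|U_i z_{i,2,j}|\tanh\big(\tfrac{U_i z_{i,2,j}}{\varepsilon_j}\big)+\overline{\rho}\,z_{i,2,j}\tanh\big(\tfrac{\overline{\rho}z_{i,2,j}}{\varepsilon_j}\big)\Big)+\tfrac{\phi}{1-\rho}|z_{i,2,j}|.
\]
Lemma~\ref{lemma1} then converts this into $z_{i,2}^Tu_i\le -|z_{i,2}^TU_i|+c\,\varepsilon^*(\varepsilon_1+\varepsilon_2)$, with the $\phi$-term absorbed because $\overline{\rho}>\phi/(1-\rho)$. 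Combined with $\kappa^*\ge\kappa_1$ and $|\delta_i|\le D_i$, the actual delayed saturated input therefore acts like $\kappa_1 U_i$ up to bounded residuals.

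Next I would build the Lyapunov function
\[
V=\sum_i\Big(V_{i,0}+\tfrac12 z_{i,1}^Tz_{i,1}+\tfrac12\log\tfrac{\sigma_{i,2}^2}{\sigma_{i,2}^2-z_{i,2}^2}+\tfrac{1}{2s_i}\tilde W_i^T\tilde W_i+\tfrac12\psi_i^TF\psi_i+\tfrac12\eta_i^T\eta_i\Big).
\]
The filter compensation $\eta$ and the auxiliary states $\psi$ are stacked into a linear system with Hurwitz matrix $E$; Lemma~\ref{lemma3} supplies $F$ for the quadratic term. When I differentiate along \eqref{equ8}–\eqref{equ12}, the $\psi$-dynamics \eqref{equ10} are designed so that the term $q_i(u_i(t-\tau))$ in $\dot\xi_{i,2}$ is replaced by $q_i(u_i)$ (modulo $\psi$), which removes the delay from the error equation. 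The remaining mismatch $-q_i(u_i)+q_i(u_i(t-\tau))$ is bounded by $2q^*$ and is handled with Young's inequality; this is why $b_{i,1}>\tfrac12$ and $b_{i,2}>\tfrac32$ are needed. The virtual law $\alpha_i$ cancels cross terms in $\dot z_{i,1}$. The barrier term produces a factor $1/(\sigma_{i,2}^2-z_{i,2}^2)$ multiplying $\dot z_{i,2}$, and $U_i$ with $\hat d_i$ cancels the known parts. I would bound the observer error $\tilde h_i$ and the NN error $\tilde W_i$ using the leakage term, via $\tilde W^T\hat W\ge\tfrac12\|\tilde W\|^2-\tfrac12\|W^*\|^2$, and use Assumption~\ref{ass2} for $p_i$. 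Lemma~\ref{lemma2} then bounds the log-barrier term by a quadratic-like term, which gives $\dot V\le -aV+b$. From this I get $V(t)\le V(0)e^{-at}+b/a$, so all signals are bounded, $|z_{i,2}|<\sigma_{i,2}$ holds for all time (the barrier is never violated), and $z_{i,1}$ is UUB. Finally, $\mathcal Z_{i,1}=z_{i,1}+\eta_{i,1}+\psi_{i,1}$, and $\eta_{i,1}$ and $\psi_{i,1}$ are bounded by exponentially stable linear filters with bounded inputs, which gives the tracking claim.

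For Zeno exclusion, on $[t^k,t^{k+1})$ we have $e_i(t)=w_i(t)-w_i(t^k)$, so $\frac{d}{dt}\|e_i\|\le\|\dot w_i\|$. Because $w_i$ in \eqref{equ5} is a smooth function of bounded closed-loop signals ($U_i$ and $z_{i,2}$, whose derivatives are bounded by the first part, with Assumption~\ref{ass1} bounding the sampling discrepancies), there is a constant $\varpi$ with $\|\dot w_i\|\le\varpi$. Since $e_i(t^k)=0$ and a trigger requires $\|e_i\|\ge\phi$, it follows that $t^{k+1}-t^k\ge\phi/\varpi=:t^*>0$.

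The main obstacle is the delay and saturation step: a clean combination of $\kappa^*u_i(t-\tau)$ with the event-triggered bound only works after the auxiliary system absorbs the delay. I would first try to include $\psi_{i,2}$ so that $\dot z_{i,2}$ contains $q_i(u_i(t))$ rather than its delayed version, and then apply the relative-threshold inequality at the current time $t$.
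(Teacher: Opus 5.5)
Your proposal follows essentially the same route as the paper's proof. Both use a Lyapunov function with a log-barrier term in $z_{i,2}$ plus NN-weight and observer-error terms. Both use the relative-threshold decomposition $w_i=(1+\Xi_{i,1}\rho)u_i+\Xi_{i,2}\phi$ with Lemma~\ref{lemma1}, giving the $2\varepsilon^*(\varepsilon_1+\varepsilon_2)\approx 0.557(\varepsilon_1+\varepsilon_2)$ residual of \eqref{equ14}. Both apply Young/leakage bounds and Lemma~\ref{lemma2} to reach $\dot V\le-\varphi V+\lambda$, and both bound the rate of $e_i$ from $e_i(t^k)=0$ to get a minimum inter-event time.

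Your version is more explicit than the paper in three places: how \eqref{equ10} removes the delay from $\dot\xi_{i,2}$, carrying $\psi_i,\eta_i$ in $V$ via Lemma~\ref{lemma3}, and recovering $\mathcal{Z}_{i,1}=z_{i,1}+\eta_{i,1}+\psi_{i,1}$. Using a quadratic rather than a barrier term for $z_{i,1}$ is a harmless difference for the stated claim.

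You leave three steps loose. First, the Lemma~\ref{lemma1} residual actually enters the barrier derivative divided by $\sigma_{i,2}^2-z_{i,2}^2$, so it becomes a constant only under an extra gain condition tying $o_{i,2}\sigma_{i,2}^2$ to $\varepsilon_1+\varepsilon_2$. Second, the factor $\kappa_1$ keeps $U_i$ in \eqref{equ12} from cancelling terms exactly, so the $(1-\kappa_1)$ leftovers must be absorbed by Young's inequality. Third, the observer estimate needs a bound $d^*$ on $\|\dot d_i\|$, not merely Assumption~\ref{ass2}. The paper's \eqref{equ14}--\eqref{equ16} also skips or simply asserts each of these, so they do not separate your argument from the paper's.
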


\begin{proof}
We design the Lyapunov function as
\begin{equation}
		\label{equ13}
		\begin{aligned}
            V=&\sum\limits_{i=1}^{n} V_i, \\
			V_i=&\frac{1}{2}\log\frac{\sigma^2_{i,1}}{\sigma^2_{i,1}-z_{i,1}^2}+\frac{1}{2}\log\frac{\sigma^2_{i,2}}{\sigma^2_{i,2}-z_{i,2}^2}\\&+\frac{1}{2\Upsilon_i}\tilde{W}_i^{*T}\tilde{W}_i^{*}+V_{i,0}, 
		\end{aligned}
	\end{equation}
    where $\sigma_{i,1}$, $\sigma_{i,2}$, $\Upsilon_i$ are designed positive constants. 
    Here $\sigma_{i,1}$ and $\sigma_{i,2}$ are barrier-Lyapunov-function based boundary parameters. 
    According to the event-triggered strategy \eqref{equ5}, we define two time-varying constants $|\Xi_{i,1}(t)|\leq1$ and $|\Xi_{i,2}(t)|\leq1$, satisfying $\Xi_{i,1}(t^k)=\Xi_{i,2}(t^k)=0$, $\Xi_{i,1}(t^{k+1})=\Xi_{i,2}(t^{k+1})=\pm1$, and we have $w_i(t)=(1+\Xi_{i,1}\rho)\mu_i(t)+\Xi_{i,2}\phi$. 
    Based on Lemma~\ref{lemma1}, we have
    \begin{equation}
		\label{equ14}
		\begin{aligned}
			z_{i,2}\frac{w_i(t)-\Xi_{i,2}\phi}{1+\Xi_{i,1}\delta}\leq 0.557(\varepsilon_1+\varepsilon_2). 
		\end{aligned}
	\end{equation}
    
    Furthermore, based on the Young’s inequality and Lemma~\ref{lemma2}, we have 
    \begin{equation}
		\label{equ15}
		\begin{aligned}
        l_i\tilde{W}_i^{*}\hat{W}_i^{*}&\leq-\frac{l_i}{2}||\tilde{W}_i^*||^2+\frac{l_i}{2}||W_i^*||^2,\\
        -\frac{o_{i,1}\sigma^2_{i,1}}{\sigma^2_{i,1}-z_{i,1}^2}&\leq-o_{i,1}\log\frac{\sigma^2_{i,1}}{\sigma^2_{i,1}-z_{i,1}^2},\\
        -\frac{o_{i,2}\sigma^2_{i,2}}{\sigma^2_{i,2}-z_{i,2}^2}&\leq-o_{i,2}\log\frac{\sigma^2_{i,2}}{\sigma^2_{i,2}-z_{i,2}^2},\\
			\frac{z_{i,2}\delta_i}{\sigma_{i,2}^2-z_{i,2}^2}&\leq\frac{z_{i,2}^2}{2(\sigma_{i,2}^2-z_{i,2}^2)}+\frac{\delta_i^2}{2},\\
              \frac{z_{i,2}{\hat{W}}_i^{*T}}{\sigma_{i,2}^2-z_{i,2}^2}K_i(v_i)\tilde{W}_i^{*T}&\leq\frac{1}{2}\frac{z_{i,2}^2{||\hat{W}}_i^{*T}||^2}{(\sigma_{i,2}^2-z_{i,2}^2)^2}+\frac{1}{2}||\tilde{W}_i^*||^2. \\
		\end{aligned}
	\end{equation}
    
    \begin{figure}
            \centering
            \subfloat[0s: Initial states for all AVs.]{\includegraphics[width=0.9\linewidth]{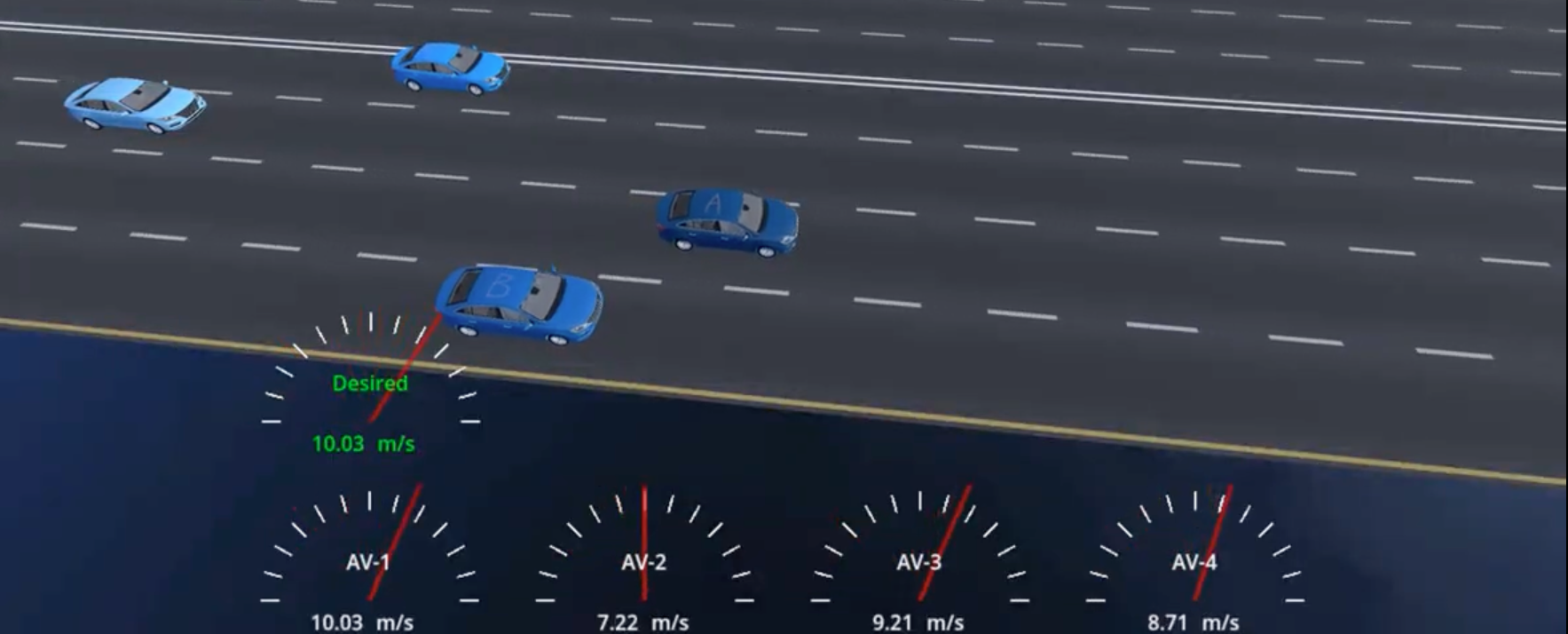}}\\
            \subfloat[0s - 20s: AVs transition to the linear formation.]{\includegraphics[width=0.9\linewidth]{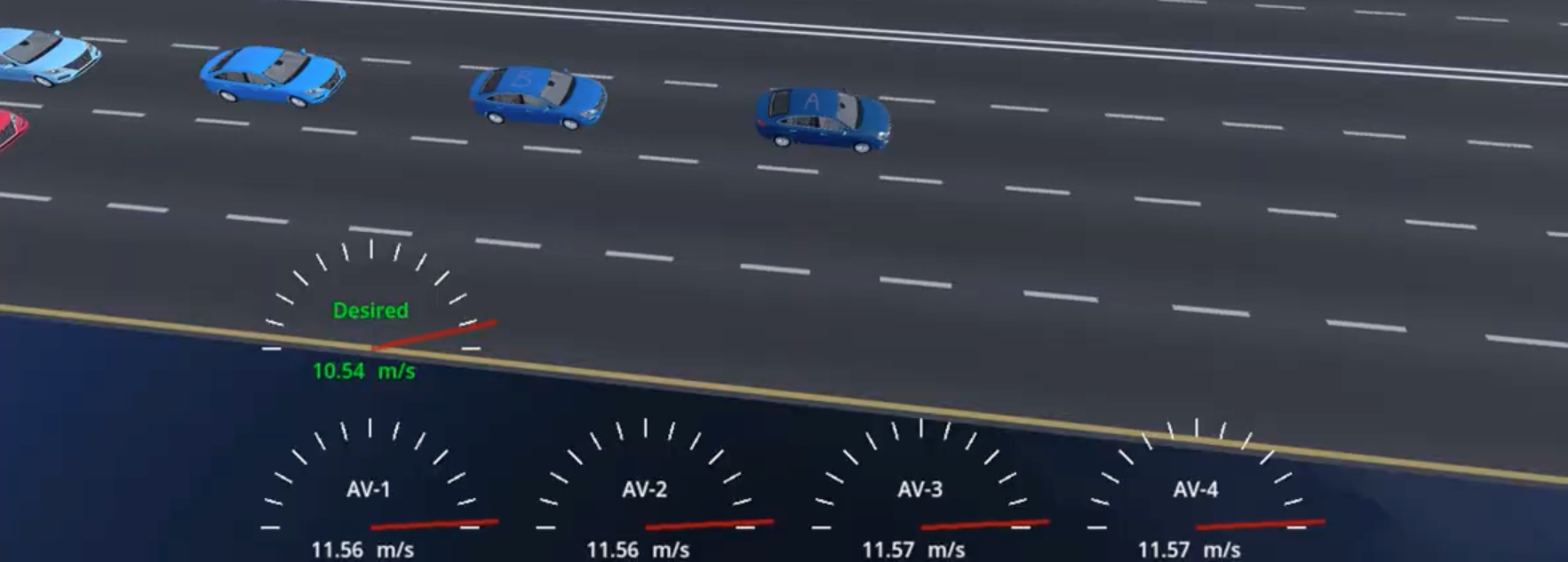}}\\
            \subfloat[20s - 22s: AVs decelerate rapidly from 12m/s to 4m/s.]{\includegraphics[width=0.9\linewidth]{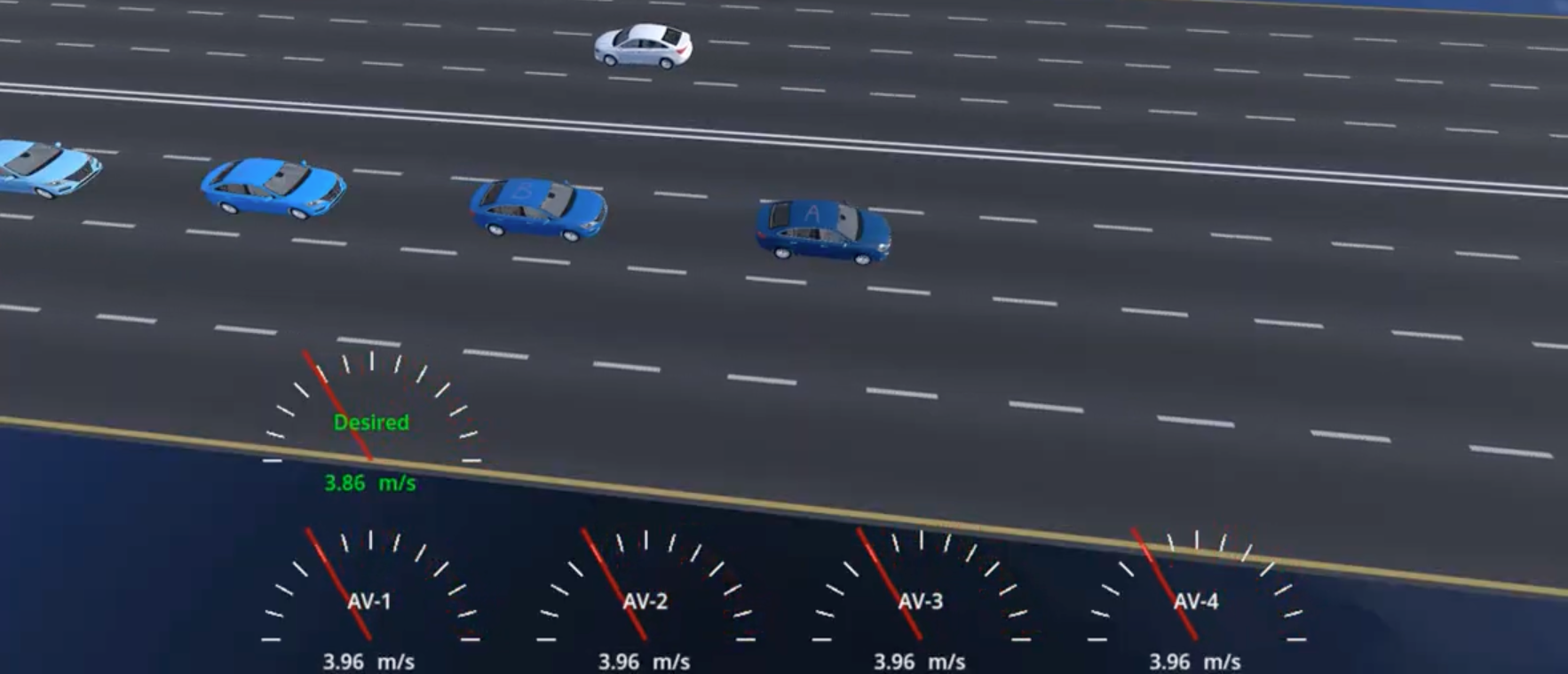}}\\
            \subfloat[22s- 30s: AVs decelerate and prepare to form a square formation.]{\includegraphics[width=0.9\linewidth]{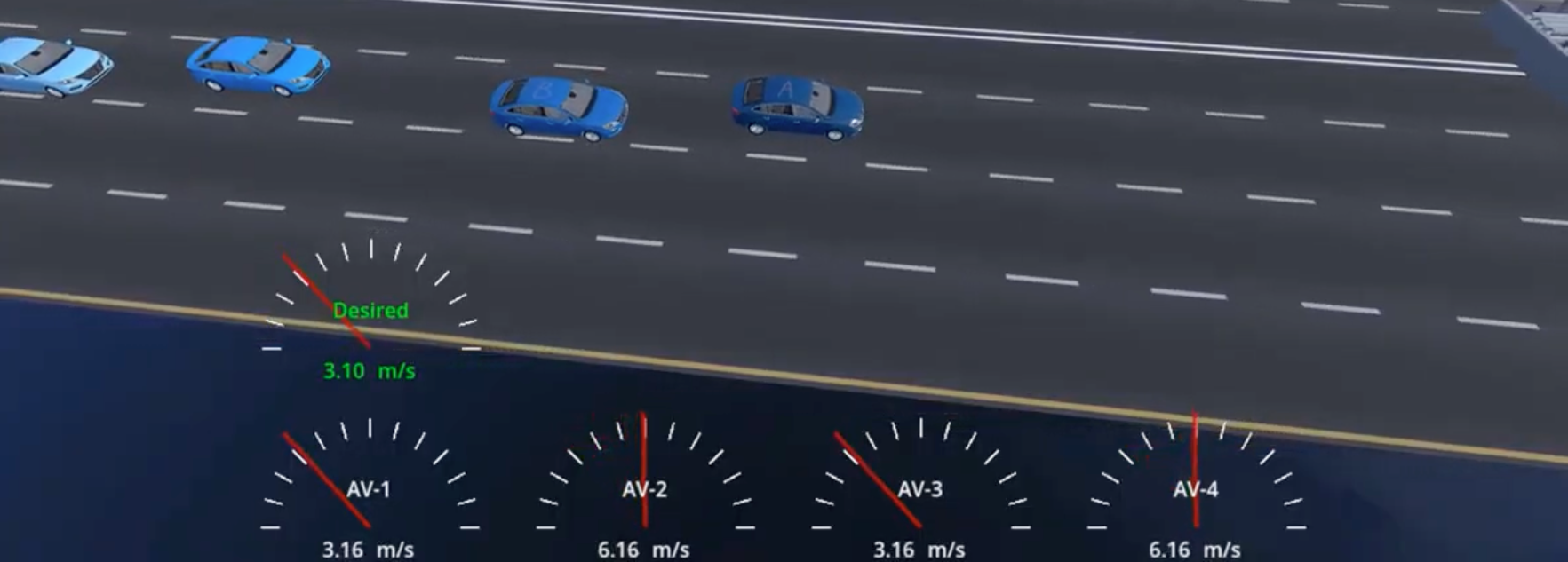}}\\
            \subfloat[30s- 40s: AVs transition to the square formation.]{\includegraphics[width=0.9\linewidth]{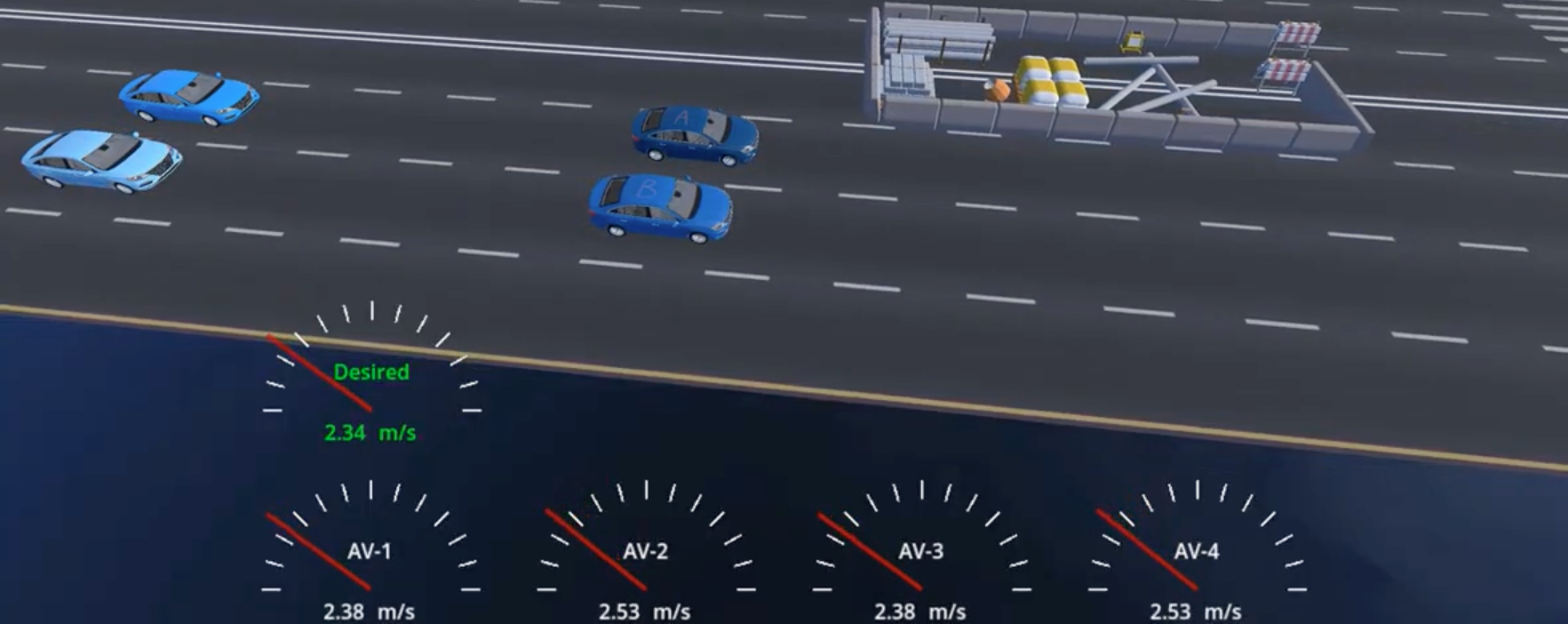}}\\
            \subfloat[40s - 50s: AVs decelerate and reach the final state.]{\includegraphics[width=0.9\linewidth]{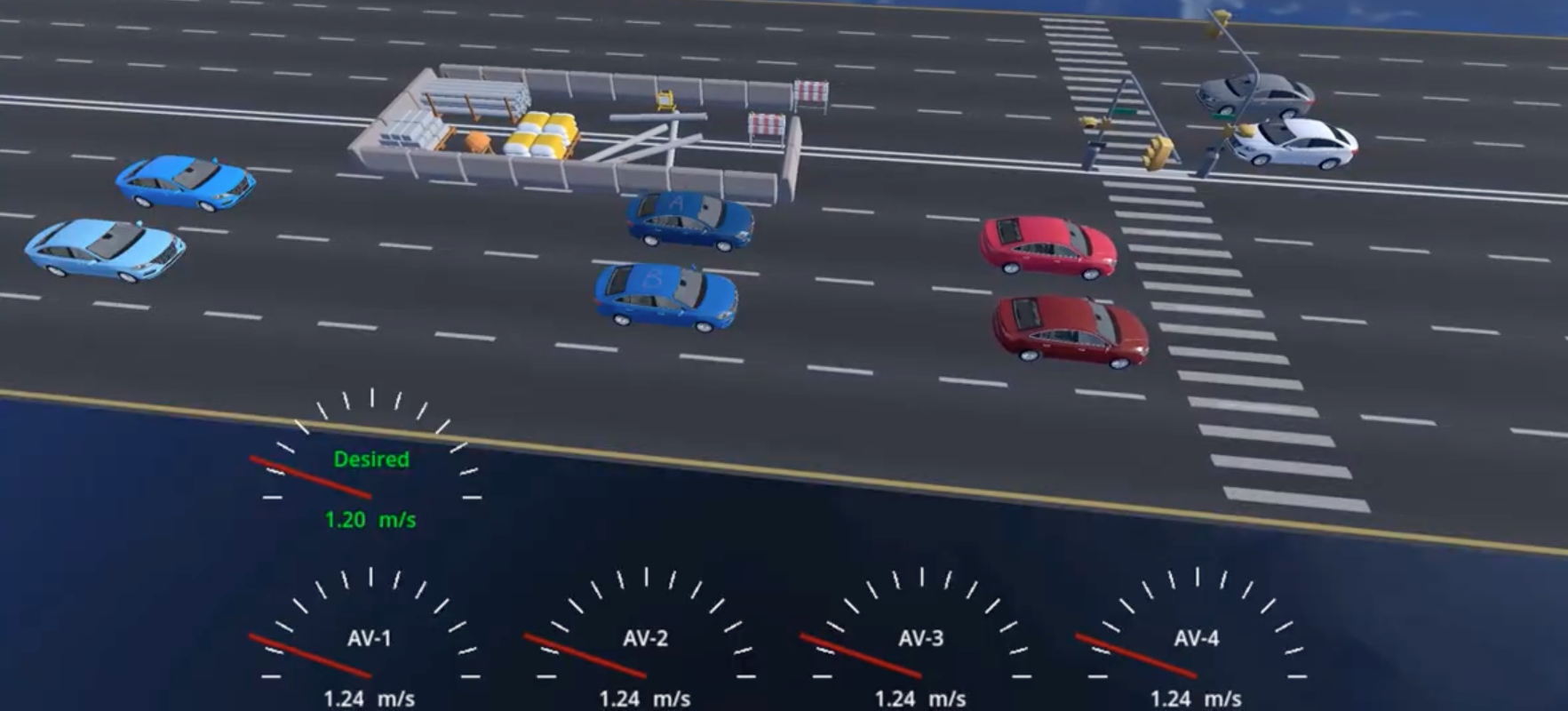}}\\
            \caption{3D visualized video shows formation change process.}
            \label{3D}
        \end{figure}

    According to \cite{stability} and Assumption~\ref{ass2}, we define $d^*$ as the upper bound of $||\dot{d}_i||$. Based on Assumption~\ref{ass1} and Lemma~\ref{lemma3}, taking the derivative of $V$ in \eqref{equ13}, we have
    \begin{equation}
		\label{equ16}
		\begin{aligned}
			\dot{V}\leq& -\sum_{i=1}^{n}o_{i,1}\log\frac{\sigma^2_{i,1}}{\sigma^2_{i,1}-z_{i,1}^2}-\sum_{i=1}^{n}o_{i,2}\log \frac{\sigma^2_{i,2}}{\sigma^2_{i,2}-z_{i,2}^2}\\
            &-\sum_{i=1}^{n}\frac{l_i}{2}||\tilde{W}_i^*||^2+\sum_{i=1}^{n}\frac{l_i}{2}||W_i^*||^2+\sum_{i=1}^{n}0.557(\varepsilon_1+\varepsilon_2)\\
            &+\sum_{i=1}^{n}\frac{D_i^2}{2}+\frac{1}{2\pi_{i}}||\tilde{W}_i^*||^2+||x_i^2||\frac{g_i^2+g_i^4}{2}+\frac{d^{*2}}{2}-\varpi_i\tilde{h}_i^2, 
		\end{aligned}
	\end{equation}
    where $\pi_{i}$ and $\varpi_i$ are the designed positive constants with $\varpi_i=g_i-\frac{3}{2}-\frac{\pi_{i}^2}{2}$. 
    We now define $\Pi_{min}(\chi)$ that denotes the minimum eigenvalue of the positive definite matrix $\chi$. Then, we have
    \begin{equation}
		\label{equ17}
		\begin{aligned}
			\dot{V}\leq& -\varphi V+\lambda, 
		\end{aligned}
	\end{equation}
    where $\varphi=\min\{\Pi_{min}(2o_{i,1}), \Pi_{min}(2o_{i,2}), \Pi_{min}(l_i)\}$ and $\lambda=\sum_{i=1}^{n}\frac{D_i^2}{2}+\sum_{i=1}^{n}0.557(\varepsilon_1+\varepsilon_2)+\frac{1}{2\pi_{i}}||\tilde{W}_i^*||^2+||x_{i,1}^2||\frac{c_i^2+c_i^4}{2}+\frac{\xi^{*2}}{2}$. Then \eqref{equ13} satisfies
    \begin{equation}
		\label{equ18}
		\begin{aligned}
			0\leq V(t)\leq\frac{\lambda}{\varphi}+(V(0)-\frac{\lambda}{\varphi})e^{-\varphi t}. 
		\end{aligned}
	\end{equation}
    
    From \eqref{equ18}, the total tracking errors $z_{i,1}$, $z_{i,2}$ and parameter estimation errors $\tilde{W}_{i}$ are all UUB. 
    Since all close-loop error signals are bounded and based on \eqref{equ5}, we have that $w_i(t)$ is continuous and bounded. 
    Furthermore, observing that $e_i(t^k)=0$ and $\lim_{t\rightarrow t^{k+1}}e_i(t)=\rho|u_i|+\phi$, there exists a constant $\Theta_i \in \mathbb{R}^+$ ensuring that the event-triggered minimum inter-execution interval $t^*$ satisfies $t^*\geq(\delta|u_i|+\phi)/\Theta_i$, successfully preventing Zeno behavior.
   

    In Theorem~\ref{main_theorem_conv}, we established that the proposed collaborative adaptive formation control framework enables the AV fleet to achieve consensus tracking while keeping all control errors bounded. 
    Specifically, the proof introduces the Lyapunov function in \eqref{equ13}, which is constructed based on the tracking errors. 
    Then, by applying Lyapunov stability analysis, we showed through \eqref{equ17} and \eqref{equ18} that all errors are UUB. 
    Compared with the previous work in \cite{lemma3}, the present analysis differs in \eqref{equ14} and \eqref{equ15} because the control law in \eqref{equ12} is specifically designed to handle input saturation and time delays. 
    In addition, unlike the static triggering threshold adopted in~\cite{lemma3}, the threshold in our proposed event-triggering condition varies with $u_i(t)$. 
    Moreover, the adopted ETC strategy is shown to exclude Zeno behavior.
\end{proof}

    \begin{figure*}
            \centering
            \includegraphics[width=0.49\textwidth]{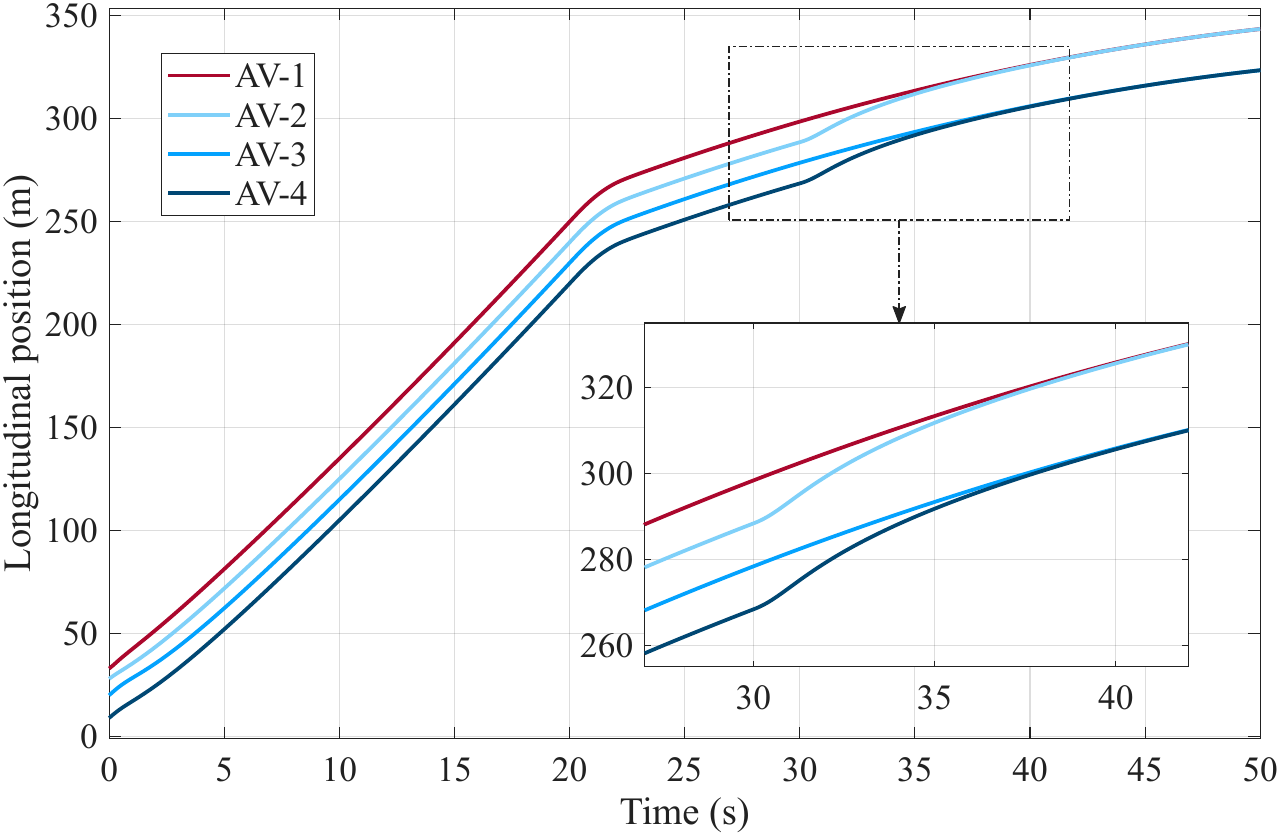}
            \includegraphics[width=0.49\textwidth]{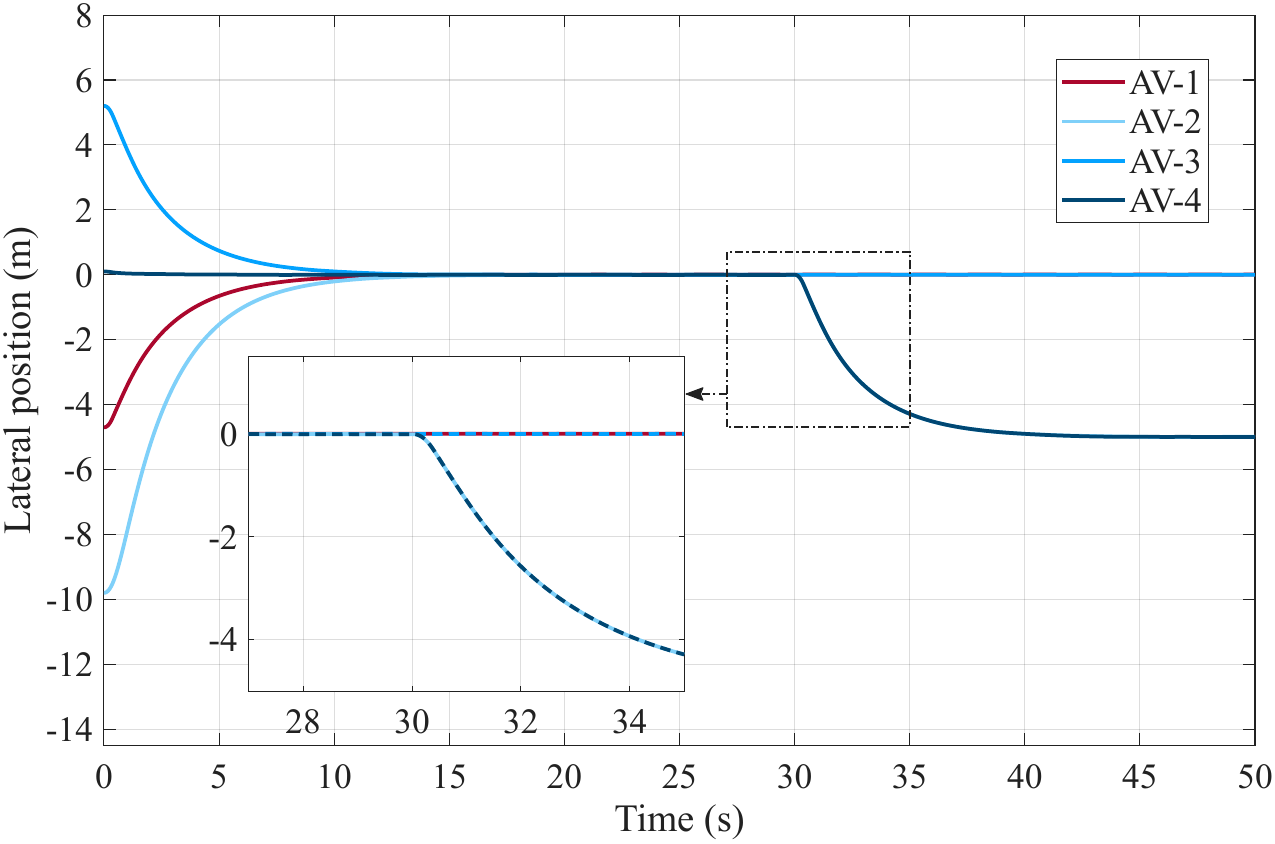}
            \caption{The longitudinal and lateral tracking control performance of the AV-$i$.}
            \label{tracking}
    \end{figure*}
    
    \begin{figure*}
            \centering
            \includegraphics[width=0.49\textwidth]{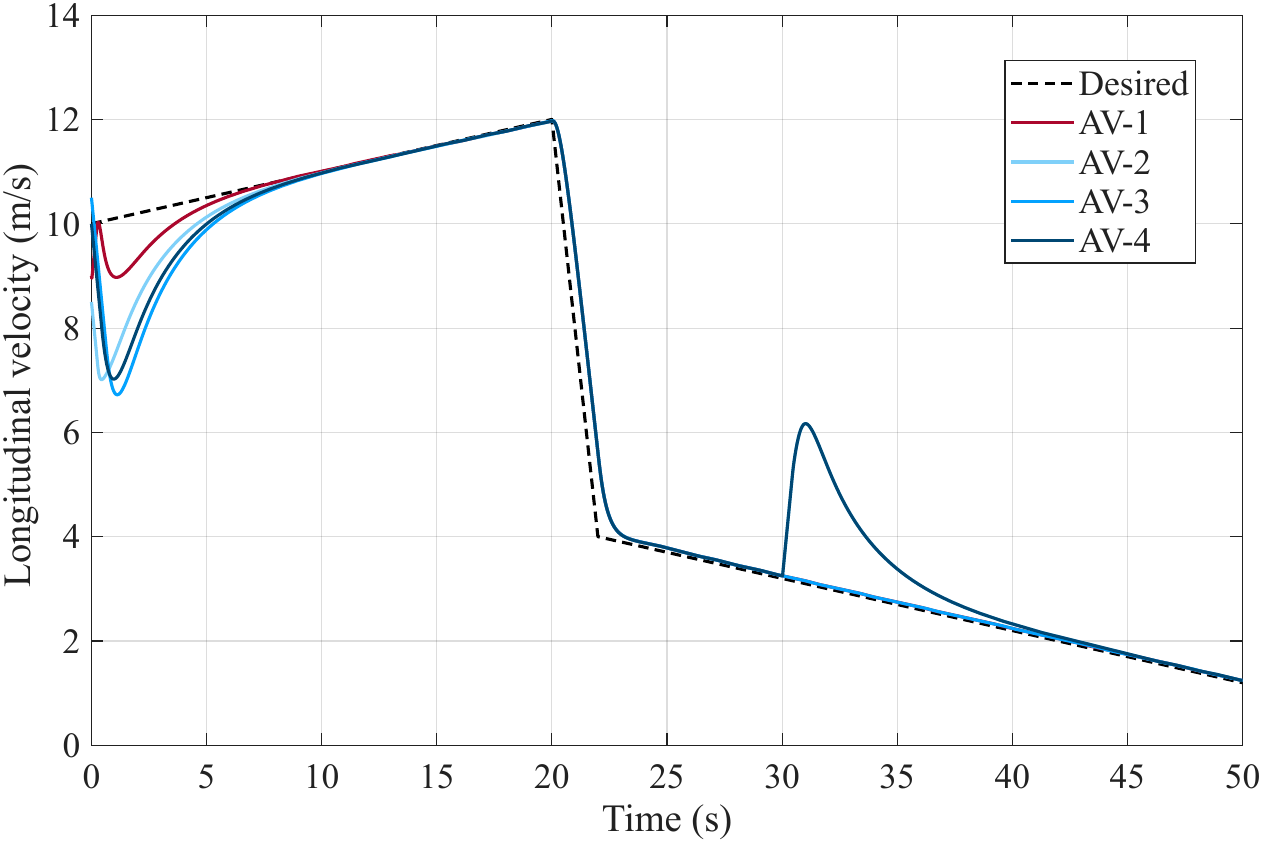}
            \includegraphics[width=0.49\textwidth]{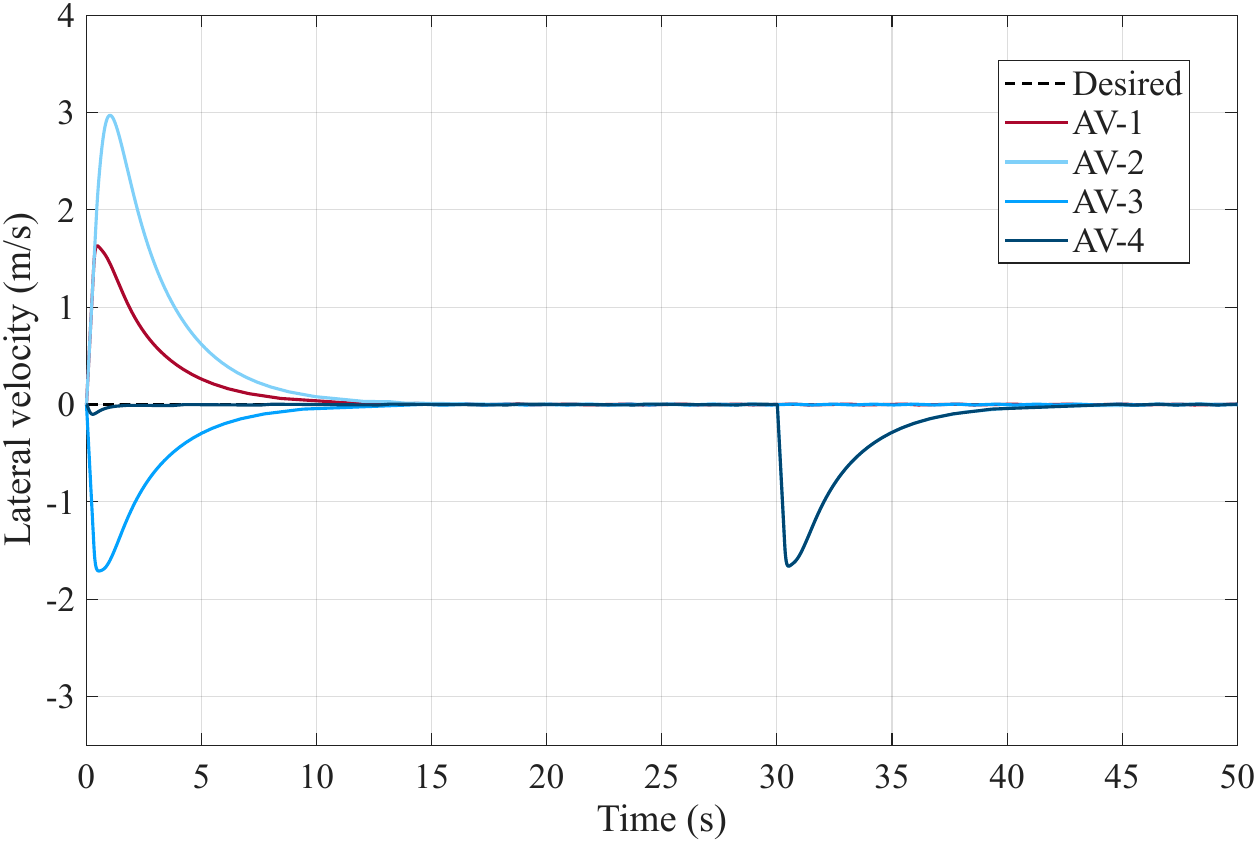}
            \caption{The longitudinal and lateral velocities of AV-$i$ under relative-threshold event-triggered control strategy.}
            \label{velocity}
    \end{figure*}

    \section{Simulation Results}\label{sec5}
	
     To verify the validity of the theoretical derivation, we conduct numerical simulations in MATLAB to verify the effectiveness of the proposed control strategy, focusing on a platoon formation change scenario involving four AVs on a four-lane road. Under the proposed collaborative adaptive formation control framework, AVs start from different initial positions and track the leader under speed variations and formation reconfigurations, thereby evaluating their dynamic responses and tracking performance.
    In this section, we first introduce the parameter setting for the control strategy. Then we present the simulation results, including the position tracking errors (Fig.~\ref{tracking}), the velocity profiles (Fig.~\ref{velocity}), the saturated control inputs (Fig.~\ref{controller}), the event‑triggering statistics and update intervals (Table~\ref{triggercounts} and Fig.~\ref{trigger}), and the inter‑vehicle safe distances (Fig.~\ref{safe}). In addition, we discuss the implications for traffic flow. Finally, we present the comparisons with existing literature.

        \subsection{Parameter Setting}
        
        In the simulation, the masses of the AVs are defined as $m_1 = 2450~\mathrm{kg}$, $m_2 = 2135~\mathrm{kg}$, $m_3 = 2980~\mathrm{kg}$, $m_4 = 2370~\mathrm{kg}$. 
        According to \cite{2025}, the resistance is characterized by $r_i^{x}=0.5H_1 \times H_2 \times H_3 \times (v_i^{x})^2$ and $r_i^{y}=0.5H_1 \times H_2 \times H_3 \times (v_i^{y})^2$, with air density $H_1=1.206\ \mathrm{kg/m^3}$, cross-sectional area $H_2 = 5.58~\mathrm{m^2}$ and dimensionless drag coefficient $H_3 = 0.3$. 
        Then define the vehicular leader's desired position trajectory $x^r(t)=[x_1^{rx},x_1^{ry}]^T$ as
    \begin{align}
	x_1^{rx} &=
	\begin{cases}
		0.05t^2+10t+35\ \mathrm{m}, & 0\ \mathrm{s} \leq t < 20\ \mathrm{s}, \\
		-2t^2+92t-785\ \mathrm{m}, & 20\ \mathrm{s} \leq t < 22\ \mathrm{s}, \\
		-0.05t^2+6.2t+158.8\ \mathrm{m}, & 22\ \mathrm{s} \leq t \leq 50\ \mathrm{s}, 
	\end{cases}
	\label{equ23} \\
	x_1^{ry} &= 0\ \mathrm{m/s}, \quad 0\ \mathrm{s} \leq t < 50\ \mathrm{s} . 
	\label{equ24}
\end{align}

    The four lanes are each $5\mathrm{m}$ wide, with the centerline of the second lane from the top serving as the x-axis. The initial states of AVs' position are $x_1(0)=[35 \mathrm{m},-4.8 \mathrm{m}]^T$,
    $x_2(0)=[30 \mathrm{m},-9.9 \mathrm{m}]^T$,
    $x_3(0)=[21 \mathrm{m},5.3 \mathrm{m}]^T$,
    $x_4(0)=[10 \mathrm{m},-0.2 \mathrm{m}]^T$. 
    The initial states of AVs' speed are 
    $v_1(0)=[9 \mathrm{m/s},0]^T$,
    $v_2(0)=[8.5 \mathrm{m/s},0]^T$,
    $v_3(0)=[10.5 \mathrm{m/s},0]^T$,
    $v_4(0)=[10 \mathrm{m/s},0]^T$.

    The prescribed formation offset $\Psi_i$ is defined as:
    From $0\mathrm{s}$ to $30\mathrm{s}$, we set $\Psi_1=[-10\mathrm{m},0]^T$, $\Psi_2=[-20\mathrm{m},0]^T$, $\Psi_3=[-30\mathrm{m},0]^T$ and $\Psi_4=[-40\mathrm{m},0]^T$; From $30\mathrm{s}$ to $50\mathrm{s}$, we set $\Psi_1=[-10\mathrm{m},0]^T$, $\Psi_2=[-10\mathrm{m},-5\mathrm{m}]^T$, $\Psi_3=[-30\mathrm{m},0]^T$ and $\Psi_4=[-30\mathrm{m},-5\mathrm{m}]^T$.

    The control parameters are selected as follows:
    the input saturation boundary is $q^*=4.5$; the controller gains are $o_{i,1}=9$, $o_{i,2}=1.2$; the time-delay damping-like gains are $b_{i,1}=b_{i,2}=1.5$; the observer gain is $g_i=1.2$; the NNs adaptation gain and NNs leakage gain are $s_i=8$ and $l_i=5$; the dynamic-threshold ETC strategy constants are $\rho=0.01$, $\phi=3$, $\varepsilon_1=\varepsilon_2=4$.

    \begin{figure*}
            \centering
            \includegraphics[width=0.475\linewidth]{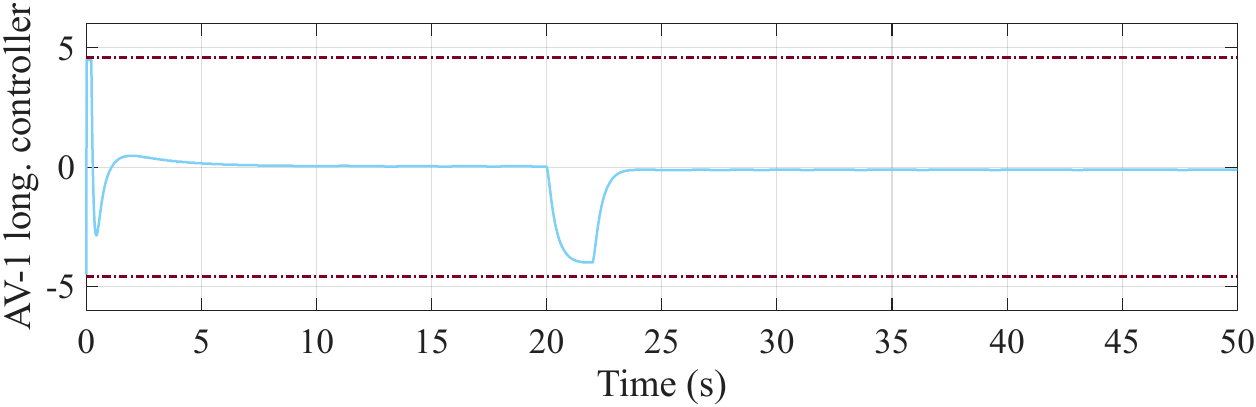}
            \includegraphics[width=0.475\linewidth]{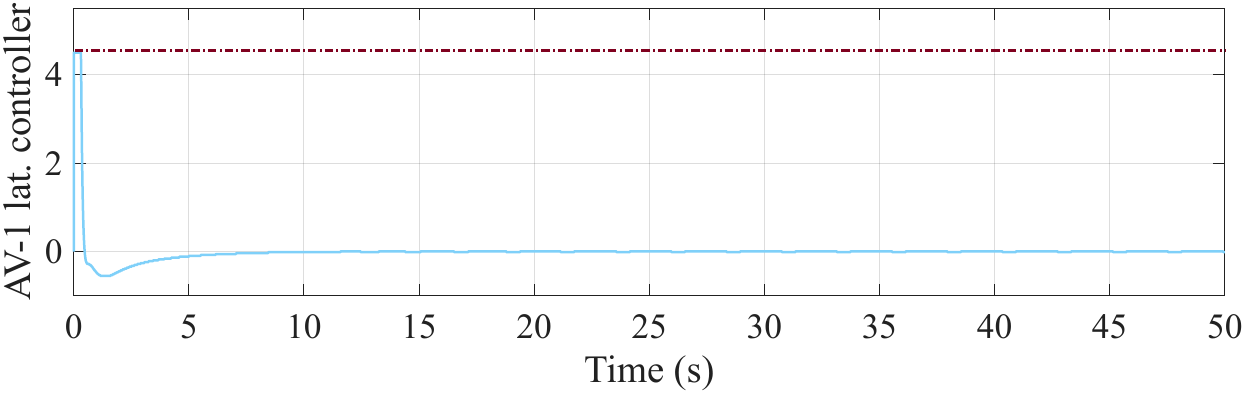}\\
            \includegraphics[width=0.475\linewidth]{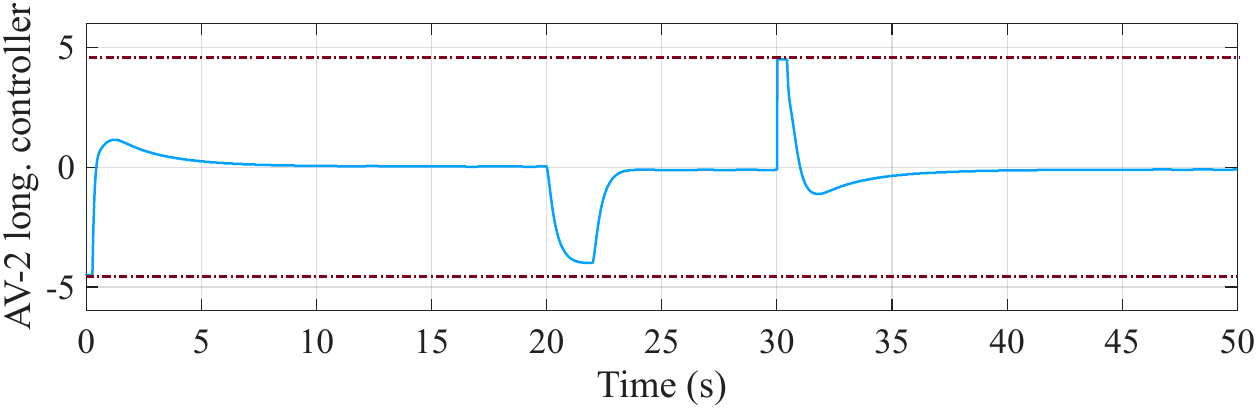}
            \includegraphics[width=0.475\linewidth]{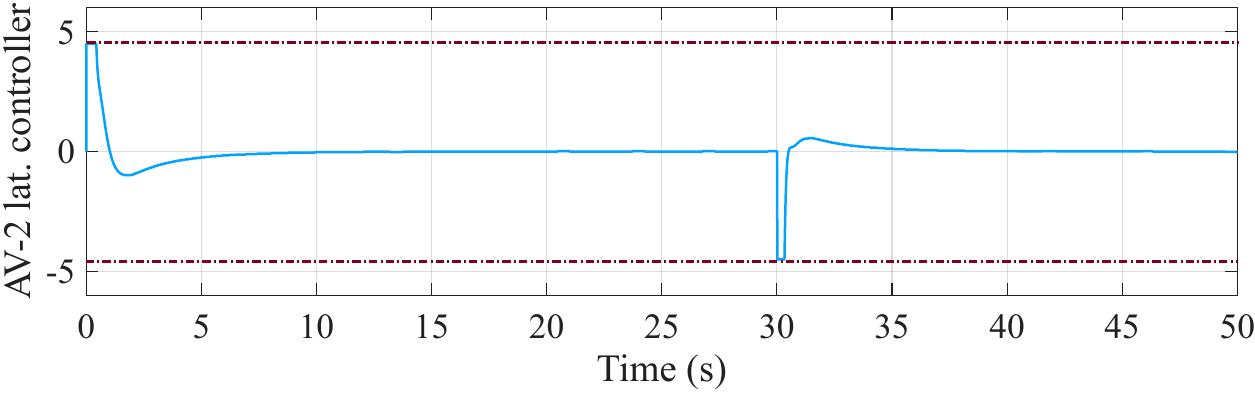}\\
            \includegraphics[width=0.475\linewidth]{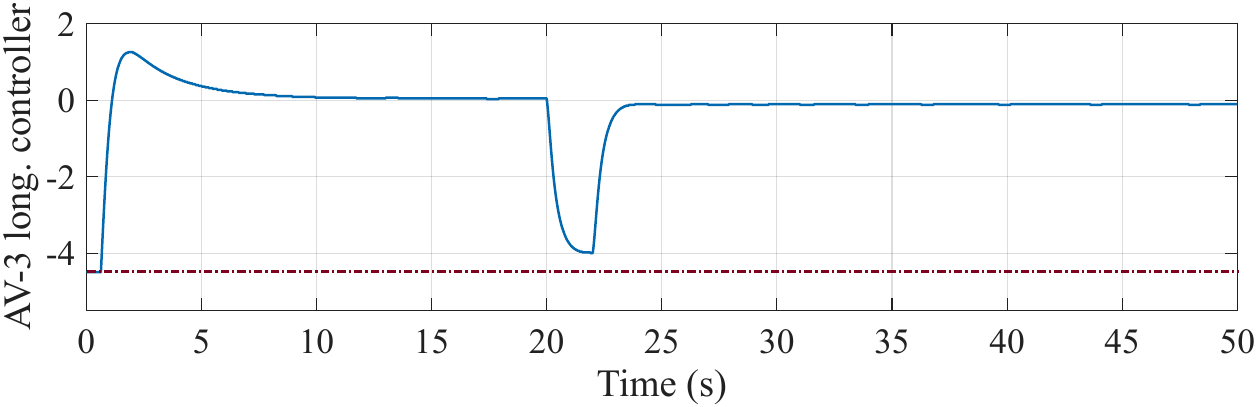}
            \includegraphics[width=0.475\linewidth]{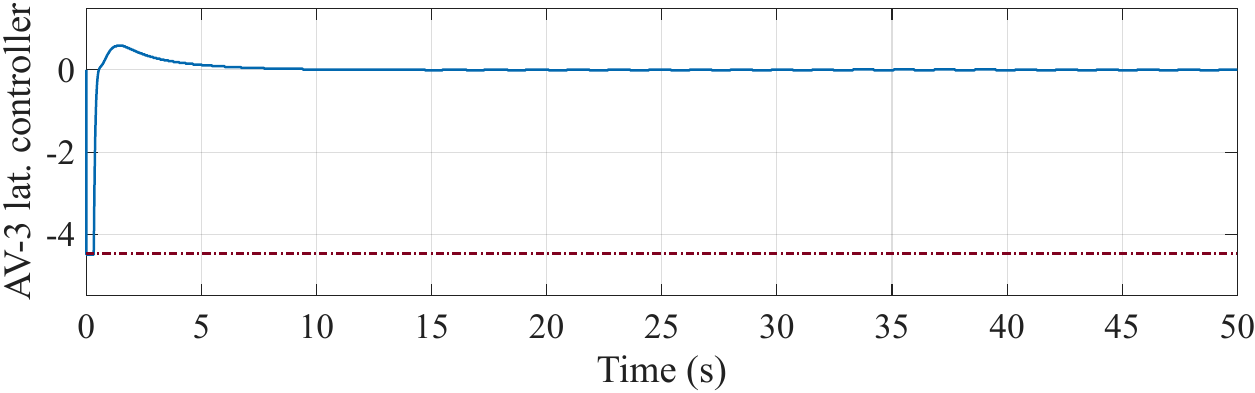}\\
            \includegraphics[width=0.475\linewidth]{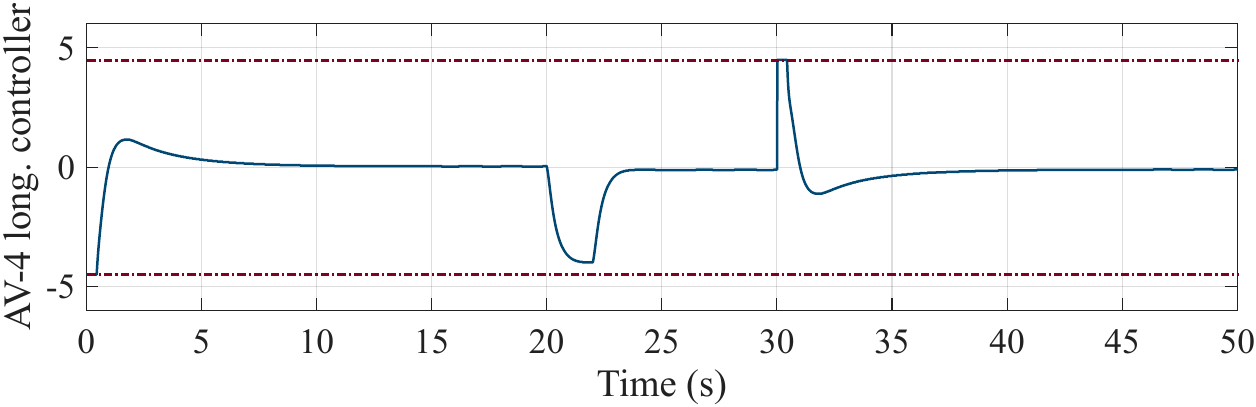}
            \includegraphics[width=0.475\linewidth]{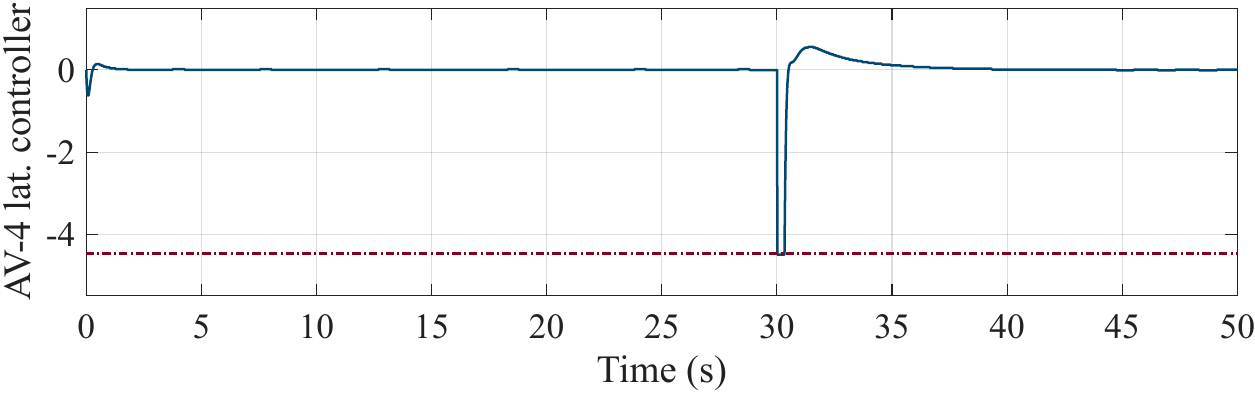}
            \caption{The longitudinal and lateral input controllers for the AV-$i$ with input saturation.}
            \label{controller}
        \end{figure*}

\subsection{Analysis on Control Performance}
The four AVs cooperative lane changing formation control scenario shown in Fig.~\ref{running} is tested. 
We implemented the system prototype using Godot 4.5.2 Mono (\url{https://godotengine.org}). The system takes simulated signals as input and presents the results through an interactive visualization interface. Fig.~\ref{3D} presents selected screenshots of the 3D visualization of the AV fleet. The full video can be accessed via the following link (\url{https://youtu.be/LZdUl_v4jvw}). It is worth noting that Fig.~\ref{3D} here shows the dynamic changes in the structure depicted in Fig.~\ref{running}, and the video provides a better illustration of the evolution of the vehicle fleet. 
Fig.~\ref{tracking} illustrates the longitudinal and lateral tracking control performance of AV-$i$. 
In the longitudinal direction, the four AVs rapidly reach their respective desired positions and accomplish the tracking task at the initial stage. 
In addition, when the velocity changes at 20 s, all vehicles exhibit smooth and coordinated trajectory adjustments. When the fleet formation changes at 30 s, the platoon completes the longitudinal transition to a side by side arrangement in pairs within approximately 5 s. 
In the lateral direction, the four AVs require about 12 s at the initial stage to complete the lane-merging maneuver from four separate lanes into a single lane. Similarly, when the fleet configuration changes at 30 s, the platoon takes approximately 10 s to accomplish the lane-change process in which AV-$2$ and AV-$4$ move to the adjacent lane. Overall, the entire tracking control process remains smooth and stable.

Fig.~\ref{velocity} presents the longitudinal and lateral velocities of AV-$i$ under the relative-threshold event-triggered control strategy. In the longitudinal direction, because each vehicle has a different initial velocity, speed fluctuations appear at the beginning. However, all four AVs quickly converge to the desired velocity. When the desired velocity changes at 20 s, slight delays are observed in the responses of the four vehicles, yet they all decelerate in a coordinated manner. At 30 s, AV-$2$ and AV-$4$ accelerate to accomplish the lane-change maneuver. In the lateral direction, when the four vehicles initially perform lane changes, their lateral velocities vary to different extents, and then quickly converge to 0 m/s after the lane-changing process is completed. Similarly, at 30 s, fluctuations in lateral velocity appear for AV-$2$ and AV-$4$ during the lane-change maneuver.

The longitudinal and lateral control inputs of AV-$i$ with input saturation are shown in Fig. \ref{controller}. It can be observed that, during the control process, the control signals repeatedly exhibit tendencies to exceed the prescribed limits, which may cause damage to the torque-related actuating components of the vehicles in practical scenarios. After the input saturation constraint is taken into account, the control signals remain confined within the predefined range, thereby preventing actuator saturation from destabilizing the closed-loop system. 

Table \ref{triggercounts} and Fig. \ref{trigger} illustrate the results of the ETC strategy. In Table \ref{triggercounts}, the rate is calculated as
\begin{equation}
		\nonumber
		\begin{aligned}
			\text{Rate}=\frac{\text{total counts-trigger counts}}{\text{total counts}}\times 100\%,
		\end{aligned}
	\end{equation}
which represents the proportion of control resources saved by the ETC strategy. A higher value indicates greater control efficiency. Compared with \cite{2025}, the dynamic-threshold ETC strategy adopted in this paper achieves a higher level of resource-saving efficiency. In addition, Fig. \ref{trigger} presents the triggering profiles in both the longitudinal and lateral directions, thereby providing an intuitive visualization of the discrete updating behavior of the AVs' controller signals.

In Fig.~\ref{safe}, the inter-vehicle safe distances throughout the control process are presented, indicating that the proposed dynamic-threshold ETC-based formation control strategy consistently maintains safe vehicle spacing and thereby guarantees the safety performance of the overall traffic system.

\textbf{Implications for traffic flow.} 
From the perspective of traffic flow, the above simulation results further indicate that the proposed controller can support coordinated and stable fleet operation during speed variations and formation reconfigurations. 
For example, during the 20–22 s deceleration in Fig.~\ref{velocity}, the longitudinal velocities of all four vehicles converge rapidly without overshoot, and the inter‑vehicle distances in Fig.~\ref{safe} remain consistently above the safe threshold. 
This indicates negligible disturbance propagation and stable traffic flow.
The smooth longitudinal responses in Fig.~\ref{tracking} and Fig.~\ref{velocity} imply that the vehicles can complete acceleration and deceleration without inducing excessive speed oscillations, which is important for suppressing disturbance propagation along the fleet. Meanwhile, the bounded lateral maneuvers enable the AVs to merge and reconfigure across lanes while maintaining ordered vehicle motion. The safe distance profiles in Fig.~\ref{safe} further show that the proposed scheme preserves safe inter vehicle spacing throughout the entire maneuver. Therefore, the proposed formation control strategy is beneficial for improving traffic flow regularity, lane utilization, and operational safety in multi lane autonomous driving scenarios, while the event triggered implementation reduces communication and controller update demands.

 \begin{figure*}[!t]
    \centering
    \footnotesize
    \captionof{table}{Statistics of trigger counts (where\textbf{ Rate}$^{*}$ denotes the ETC results in \cite{2025})}
    \label{triggercounts}
    \begin{tabular}{ccccccc}
        \toprule
        \textbf{ } & \multicolumn{2}{c}{\textbf{Continuous-time Controller}} &
			\multicolumn{2}{c}{\textbf{Event-triggered Controller}} & 
			 \textbf{ }  & \textbf{ } \\
            \cmidrule(lr){2-3} \cmidrule(lr){4-5}
        \textbf{No.} & \textbf{Longitudinal} & \textbf{Lateral} & \textbf{Longitudinal} & \textbf{Lateral} & \textbf{Rate} & \textbf{Rate$^{*}$} \\
        \midrule
        AV1 & 50000 & 50000 & 358 & 127 & \textbf{95.15\%} $\uparrow$ & 85.93\% \\
        AV2 & 50000 & 50000 & 517 & 297 & \textbf{91.86\%} $\uparrow$ & 51.37\% \\
        AV3 & 50000 & 50000 & 419 & 148 & \textbf{94.33\%} $\uparrow$ & 42.35\% \\
        AV4 & 50000 & 50000 & 576 & 153 & \textbf{92.71\%} $\uparrow$ & 33.17\% \\
        \bottomrule
    \end{tabular}
    \end{figure*}

          \begin{figure*}
            \centering
            \includegraphics[width=0.475\linewidth]{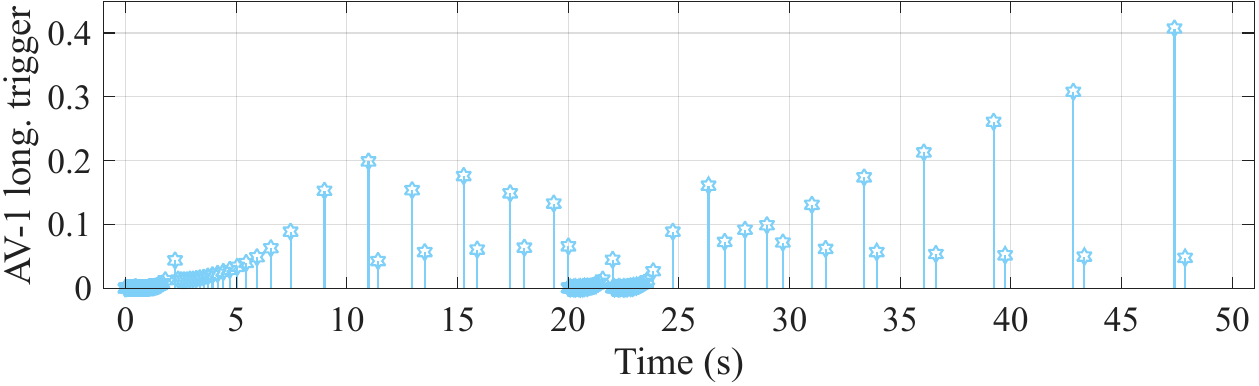}
            \includegraphics[width=0.475\linewidth]{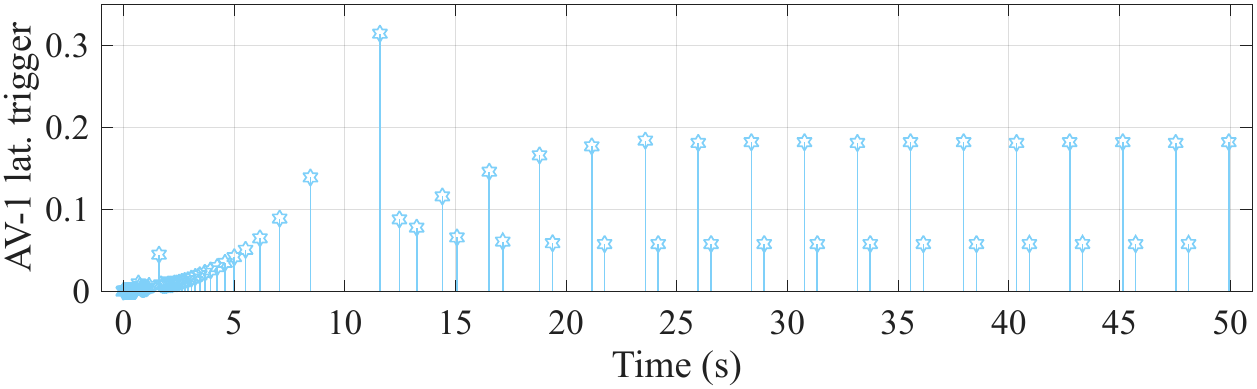}\\
            \includegraphics[width=0.475\linewidth]{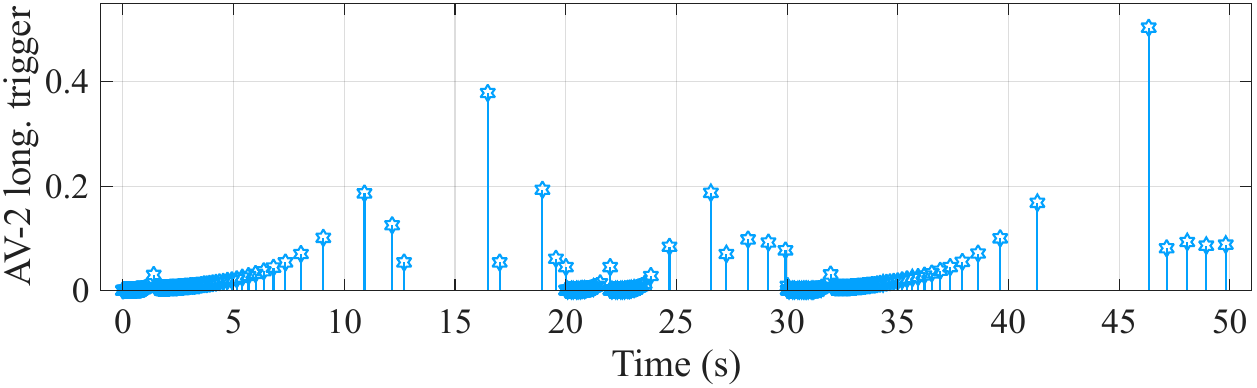}
            \includegraphics[width=0.475\linewidth]{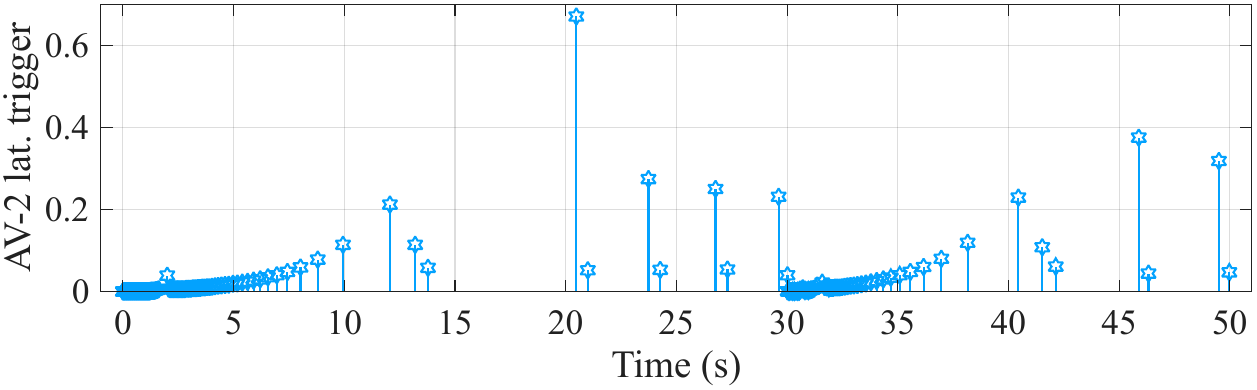}\\
            \includegraphics[width=0.475\linewidth]{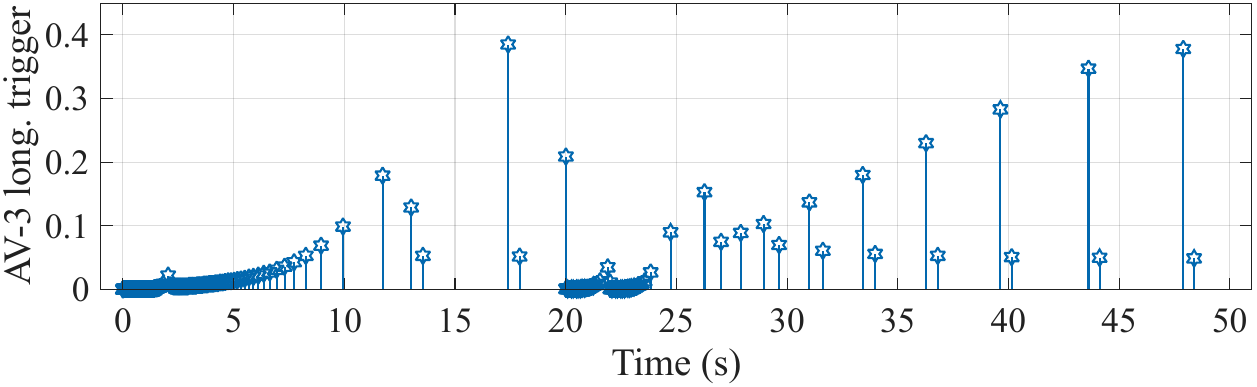}
            \includegraphics[width=0.475\linewidth]{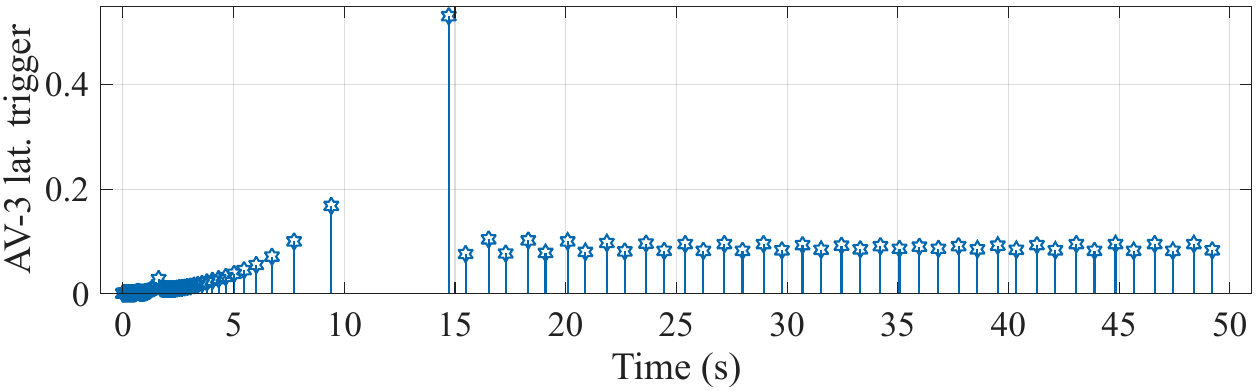}\\
            \includegraphics[width=0.475\linewidth]{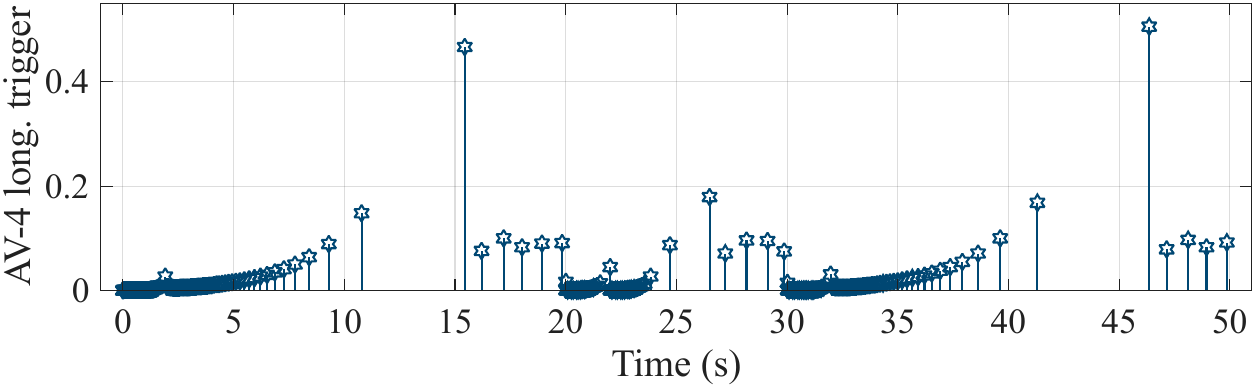}
            \includegraphics[width=0.475\linewidth]{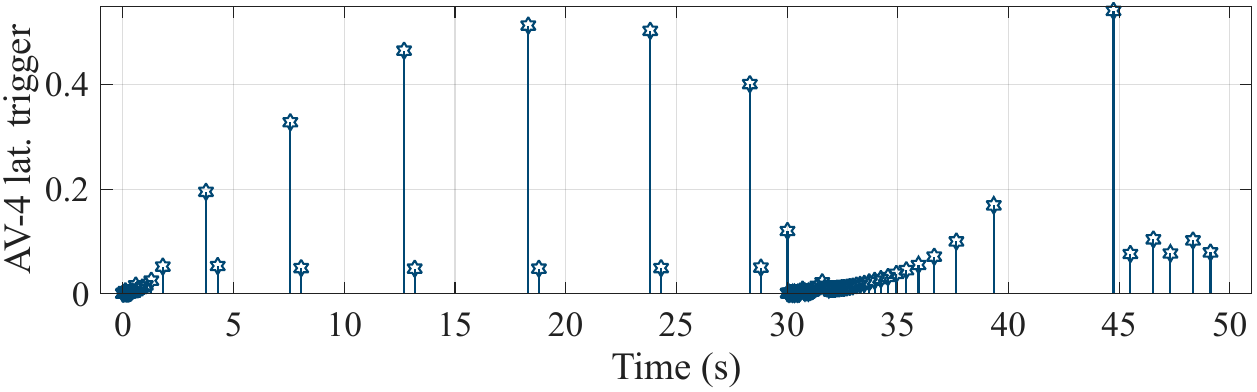}
            \caption{The longitudinal and lateral update time intervals of control laws for the AV-$i$.}
            \label{trigger}
        \end{figure*}

        \begin{figure}
            \centering
            \includegraphics[width=0.49\textwidth]{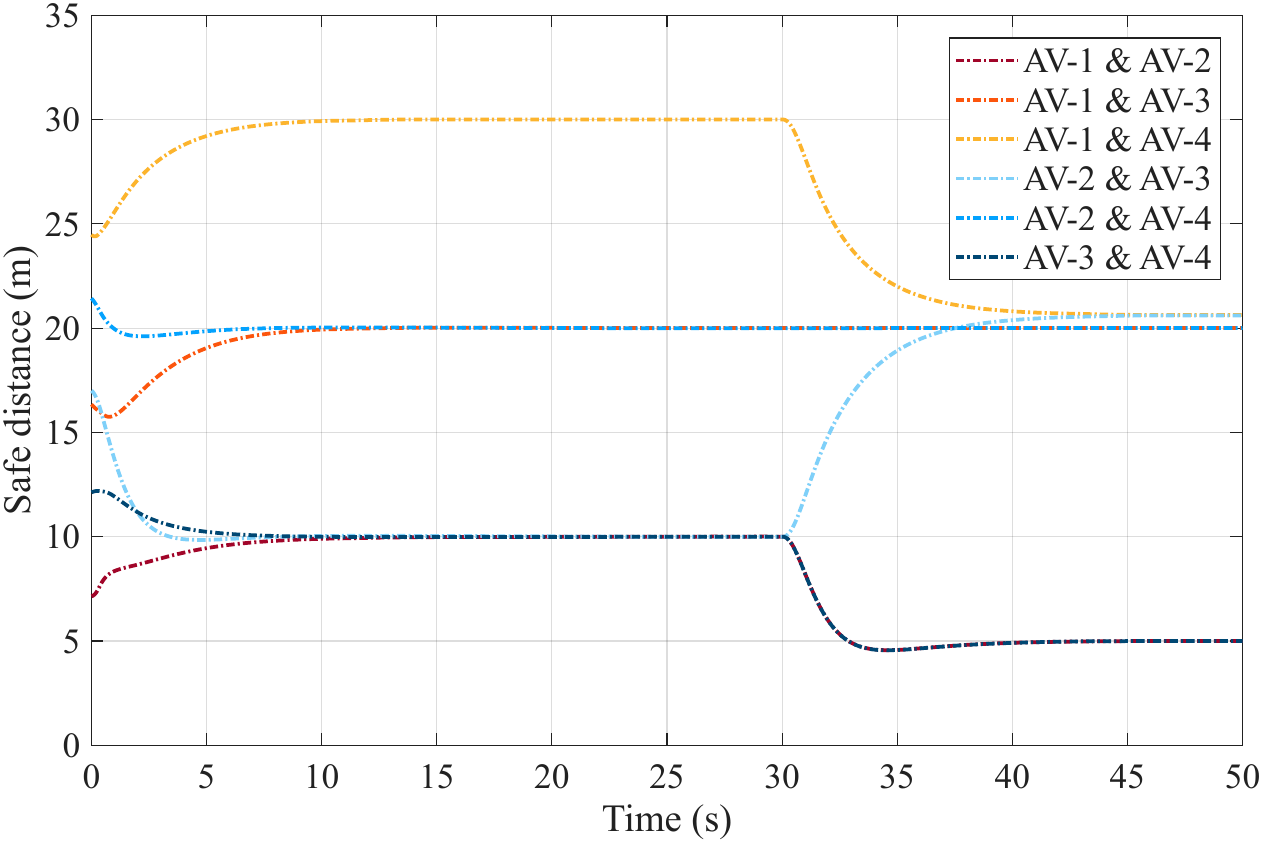}
            \caption{The safe distances between any two AVs.}
            \label{safe}
          \end{figure}


    \subsection{Comparisons with Existing Literature}

    We now compare our proposed collaborative adaptive formation control framework with the existing methods in \cite{lemma3, 2025}. 
    These studies all investigate consensus tracking problems in traffic scenarios involving four vehicles moving from different initial states to prescribed destinations, while employing neural networks and considering external disturbances. 
    However, unlike \cite{lemma3, 2025}, our proposed framework explicitly incorporates input saturation to reflect the physical limitations of throttle, braking, and steering, as shown in Fig.~\ref{controller}. 
    In addition, to better capture practical operating conditions, including perception delay, computation latency, and actuator lag, a time delay is included in the system model in \eqref{equ1}. 
    These factors make the control problem considerably more challenging. 
    Compared with \cite{2025}, the formation task considered in this paper is also more complex, and the desired velocity evolves from a constant-speed profile to a fully time-varying one. 
    Furthermore, unlike \cite{intro3, intro4, intro5, intro6, intro7, intro8, jiang}, which mainly present point-mass simulations, this paper provides a 3D visualization of the AV fleet using the simulated motion data, allowing a more intuitive assessment of the formation reconfiguration process and of practical safety issues such as vehicle overlap and collision risk.

	\section{Conclusion}\label{sec6}
	
    This paper investigates the collaborative adaptive formation control framework for AVs subject to input delay and saturation constraints, and develops a backstepping-based control framework with enhanced practical applicability. By incorporating command filters and observer-based techniques, an adaptive neural-network tracking controller together with a dynamic-threshold ETC strategy is constructed. The proposed method effectively addresses delay compensation, saturation nonlinearity, and computational burden reduction within a unified framework. Meanwhile, accurate trajectory tracking is achieved through synchronized sampling, and the convergence of velocity and position tracking errors is rigorously guaranteed. Simulation results further demonstrate that the developed approach can substantially reduce the control update frequency while preserving satisfactory tracking performance and coordinated fleet behavior. Therefore, the proposed method provides an effective and practically feasible solution for the cooperative control of AVs.

    Future work will focus on extending the proposed framework to more complicated traffic scenarios and dynamically varying fleet configurations, so as to further improve the efficiency, robustness, and adaptability of AV control systems in realistic environments.


    \flushend

    \begin{IEEEbiography}[{\includegraphics[width=1in,height=1.25in,clip,keepaspectratio]{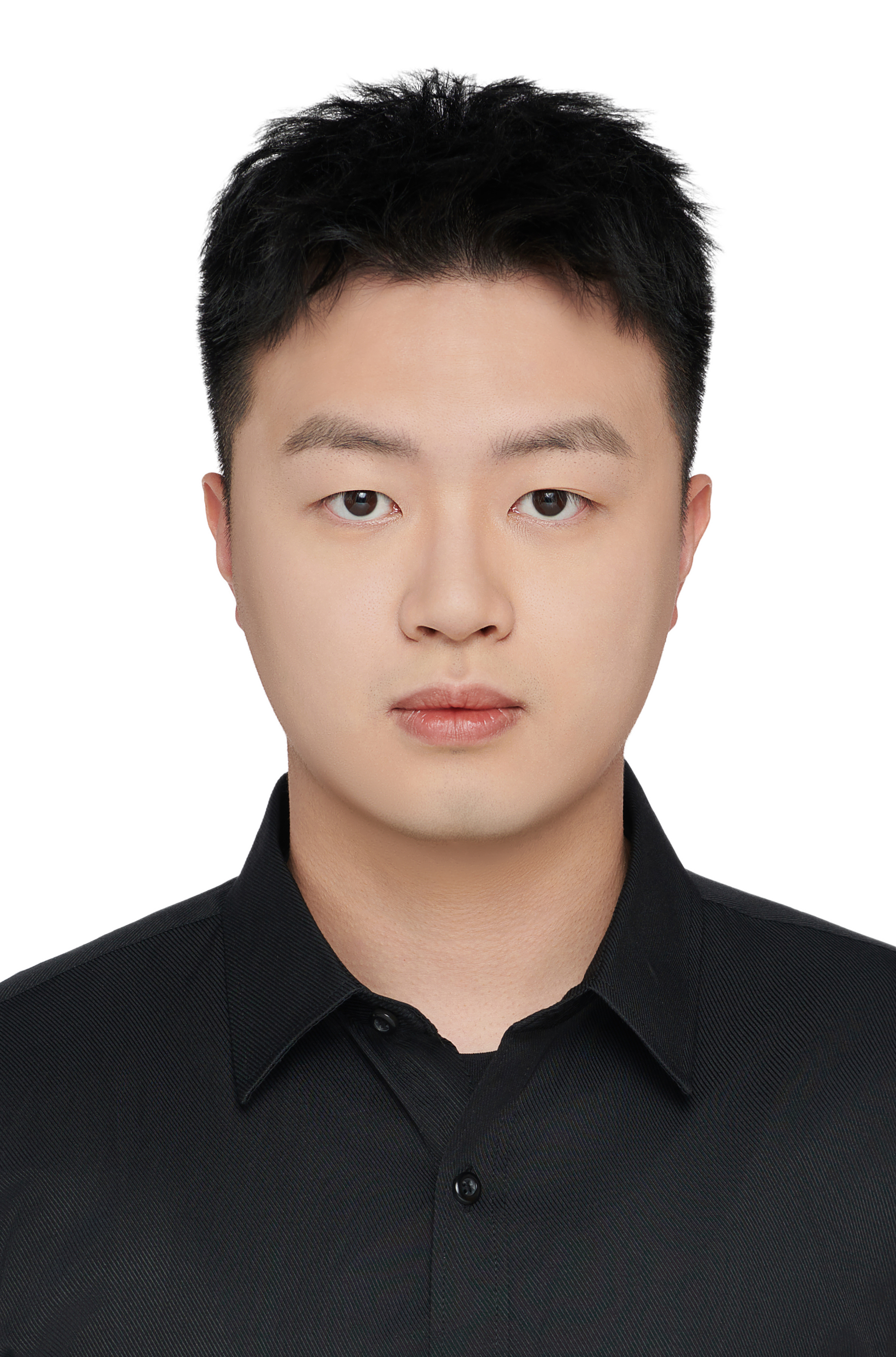}}]{Ziming Wang} received the B.Eng. degree in Electronic Information Engineering from Southwest University in 2023 and the M.Phil. degree in the Trust of Robotics and Autonomous Systems from the Hong Kong University of Science and Technology (Guangzhou) in 2025. He was a visiting scholar at the School of Vehicle and Mobility, Tsinghua University. He is currently a PhD student in the Trust of Artificial Intelligence from the HKUST-GZ. He is a senior reviewer of the IEEE TITS, IEEE TSMC, IEEE TMC and IEEE/CAA JAS. His research interests include control theory, optimization, multi-agent systems, reinforcement learning, and their applications in intelligent transportation systems. 
    \end{IEEEbiography}

    \begin{IEEEbiography}[{\includegraphics[width=1in,height=1.25in,clip,keepaspectratio]{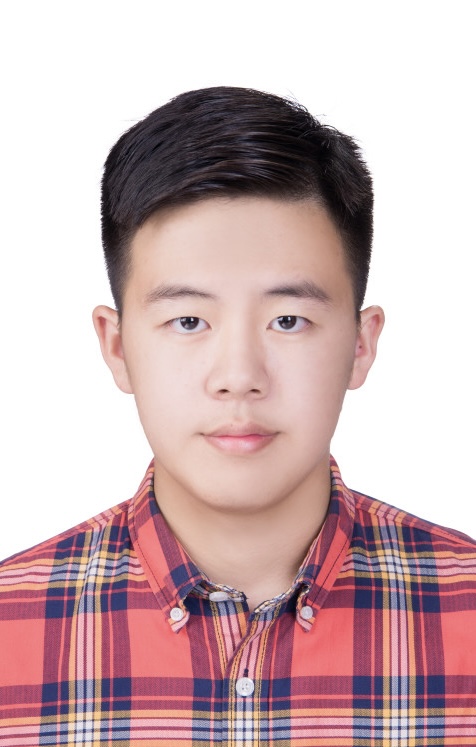}}]{Guanxuan Jiang} received the B.S. degree in Computer Science from the University of Sheffield, which included a one-year study-abroad program at the Chinese University of Hong Kong. He obtained the MPhil degree in the Trust of Computational Media and Arts from the Hong Kong University of Science and Technology (Guangzhou), where he is currently pursuing his PhD. His research interests include computational intelligence, ubiquitous computing, data-driven interaction, and extended reality.
    \end{IEEEbiography}


        

       \begin{IEEEbiography}[{\includegraphics[width=1in,height=1.25in,clip,keepaspectratio]{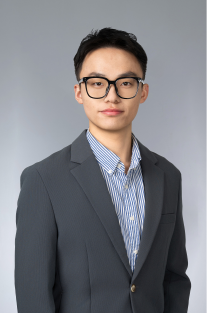}}]{Yihuai Zhang} received his B.E. degree in Vehicle Engineering from Southwest University in 2019, and his M.S. degree in Vehicle Engineering from South China University of Technology in 2022, and his Ph.D. degree in Intelligent Transportation at the Hong Kong University of Science and Technology (Guangzhou) in 2026. His research focuses on distributed parameter systems, learning and control for dynamical systems, and their applications in intelligent transportation systems.
       \end{IEEEbiography}

    \begin{IEEEbiography}[{\includegraphics[width=1in,height=1.25in,clip,keepaspectratio]{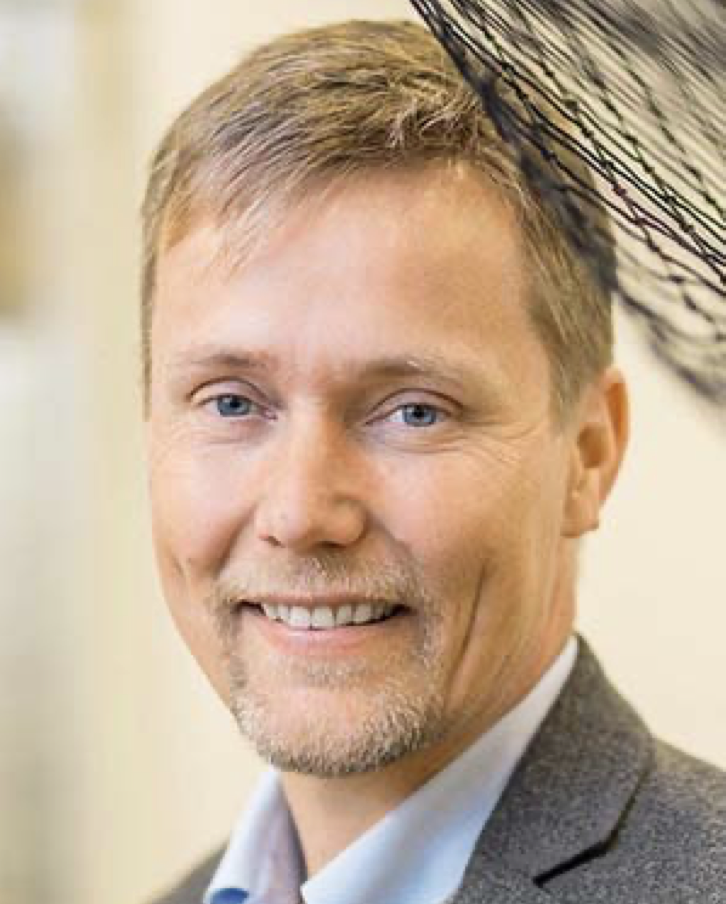}}]{Karl Henrik Johansson} (Fellow, IEEE) is Swedish Research Council Distinguished Professor in Electrical Engineering and Computer Science at KTH Royal Institute of Technology in Sweden and Founding Director of Digital Futures. He earned his MSc degree in Electrical Engineering and PhD in Automatic Control from Lund University. He has held visiting positions at UC Berkeley, Caltech, NTU and other prestigious institutions. His research interests focus on networked control systems and cyber-physical systems with applications in transportation, energy, and automation networks. For his scientific contributions, he has received numerous best paper awards and various distinctions from IEEE, IFAC, and other organizations. He has been awarded Distinguished Professor by the Swedish Research Council, Wallenberg Scholar by the Knut and Alice Wallenberg Foundation, Future Research Leader by the Swedish Foundation for Strategic Research. He has also received the triennial IFAC Young Author Prize and IEEE CSS Distinguished Lecturer. He is the recipient of the 2024 IEEE CSS Hendrik W. Bode Lecture Prize. His extensive service to the academic community includes being President of the European Control Association, IEEE CSS Vice President Diversity, Outreach $\&$ Development, and Member of IEEE CSS Board of Governors and IFAC Council. He has served on the editorial boards of Automatica, IEEE TAC, IEEE TCNS and many other journals. He has also been a member of the Swedish Scientific Council for Natural Sciences and Engineering Sciences. He is Fellow of both the IEEE and the Royal Swedish Academy of Engineering Sciences.
    \end{IEEEbiography}

    \begin{IEEEbiography}[{\includegraphics[width=1in,height=1.25in,clip,keepaspectratio]{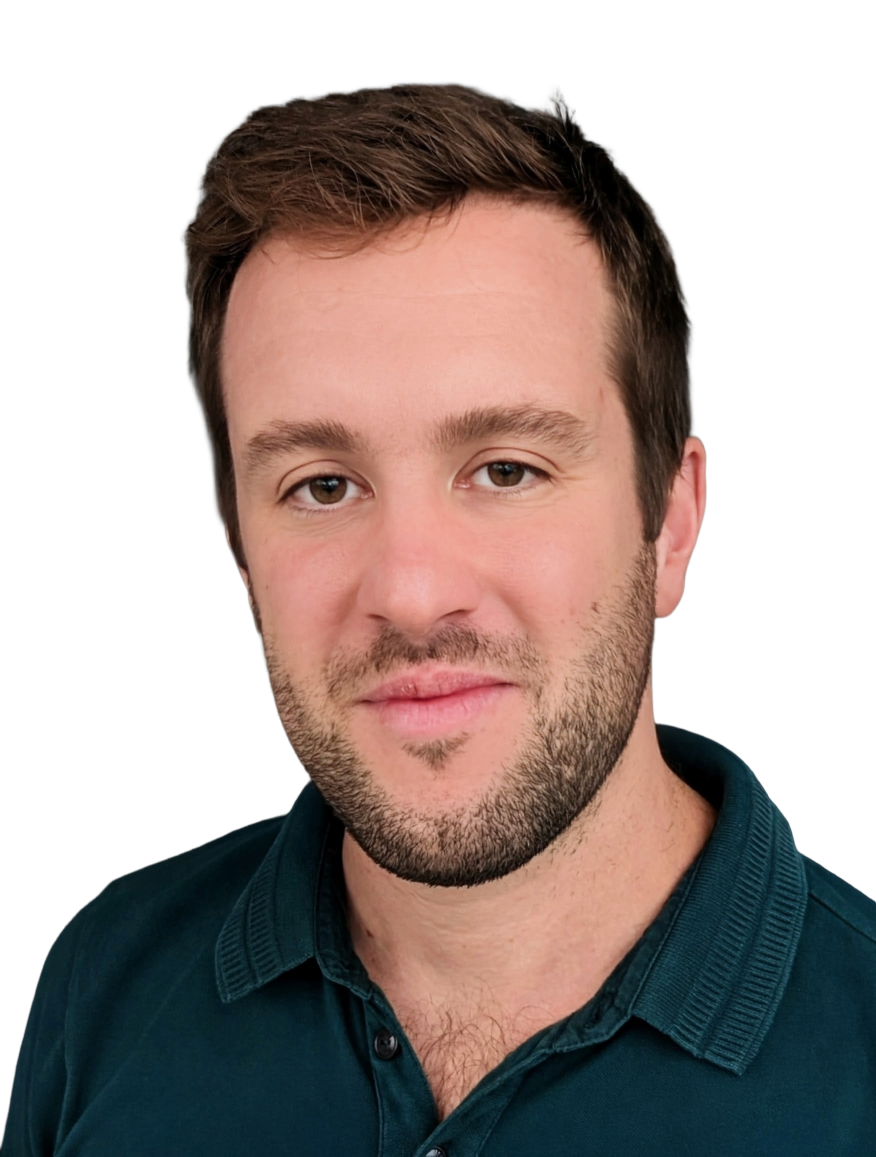}}]{Apostolos I. Rikos} (Member, IEEE) is an Assistant Professor at the Artificial Intelligence Thrust of the Information Hub, The Hong Kong University of Science and Technology (Guangzhou), Guangzhou, China. 
    He is also affiliated with the Department of Computer Science and Engineering, The Hong Kong University of Science and Technology, Clear Water Bay, Hong Kong, China.
    He received his B.Sc., M.Sc., and Ph.D. degrees in Electrical Engineering from the Department of Electrical and Computer Engineering, University of Cyprus in 2010, 2012, and 2018, respectively.
    In 2018, he joined the KIOS Research and Innovation Center of Excellence in Cyprus, where he was a Research Lecturer. 
    In 2020, he joined the Division of Decision and Control Systems of KTH Royal Institute of Technology as a Postdoctoral Researcher. 
    In 2023, he joined the Department of Electrical and Computer Engineering, Division of Systems Engineering, at Boston University as a Postdoctoral Associate.
    His research interests are in the area of distributed optimization and learning, distributed network control and coordination, privacy and security. 
    \end{IEEEbiography}

\end{document}